\newtheorem{theorem}{Theorem}
\newtheorem{corollary}{Corollary}
\begin{document}

\title{Coherence and contextuality in a Mach-Zehnder \newline interferometer }
\author{Rafael Wagner}
\email{rafael.wagner@inl.int}
\affiliation{International Iberian Nanotechnology Laboratory (INL),
Av. Mestre Jos\'{e} Veiga, 4715-330 Braga, Portugal}
\affiliation{Centro de F\'{i}sica, Universidade do Minho, Braga 4710-057, Portugal}
\orcid{0000-0001-8432-064X}
\author{Anita Camillini}
\email{anita.camillini@inl.int}
\affiliation{International Iberian Nanotechnology Laboratory (INL),
Av. Mestre Jos\'{e} Veiga, 4715-330 Braga, Portugal}
\affiliation{Centro de F\'{i}sica, Universidade do Minho, Braga 4710-057, Portugal}
\orcid{0000-0002-9966-4426}
\author{Ernesto F. Galvão}
\email{ernesto.galvao@inl.int}
\affiliation{International Iberian Nanotechnology Laboratory (INL),
Av. Mestre Jos\'{e} Veiga, 4715-330 Braga, Portugal}
\affiliation{Instituto de F\'{i}sica, Universidade Federal Fluminense,
Av. Gal. Milton Tavares de Souza s/n, Niter\'{o}i, RJ, 24210-340, Brazil}
\orcid{0000-0003-2582-9918}

\maketitle

\begin{abstract}
      We analyse nonclassical resources in interference phenomena using generalized noncontextuality inequalities and basis-independent coherence witnesses. We use recently proposed inequalities that witness both resources within the same framework. We also propose, in view of previous contextual advantage results, a systematic way of applying these tools to characterize advantage provided by coherence and contextuality in quantum information protocols. We instantiate this methodology for the task of quantum interrogation, famously introduced by the paradigmatic bomb-testing interferometric experiment, showing contextual quantum advantage for such a task. 
\end{abstract}

Quantum superposition is the most famous nonclassical feature of quantum theory. It has puzzled generations of physicists with many intriguing interpretations, and it underlies profound discoveries in quantum computation~\cite{shor1999polynomial,parker2000efficient,ahnefeld2022role}, interference of large systems~\cite{nairz2003quantum}, quantum resource theories~\cite{Chitambar19}, quantum complementarity~\cite{bohr1928quantum,wootters1979complementarity,englert1996fringe,cheng2015complementarity,basso2021complete}, and quantum foundations~\cite{elitzur1993quantum,hardy1992existence}. The theory of coherence as a resource in quantum information provided a modern perspective into quantum superposition theory and quantum interference experiments~\cite{baumgratz2014quantifying,Streltsov17}. It gave means for quantifying the amount of coherence present in quantum states inside interferometers, while elegantly characterizing non-classical phenomena{, with better tools} than visibility~\cite{chrysosthemos_quantum_2022,mishra2019decoherence,qureshi2019coherence}, being not only formally grounded in rich resource theoretic results~\cite{biswas_interferometric_2017,paul2017measuring},  and experimentally accessible~\cite{wu2021experimental}, but also profoundly connected with entanglement theory~\cite{streltsov2015measuring,streltsov2016entanglement,qiao2018entanglement}. 

Photonic devices such as beam-splitters can be interpreted as generators of quantum coherence~\cite{masini2021coherence,ares2022beam} and play a major role in modern quantum optics. The abilities of generating, controlling and measuring quantum superposition of single photons in SWAP test and related interferometric schemes are relevant resources in quantum information, particularly useful for learning properties of quantum systems~\cite{ekert2002direct,horodecki2002method,oszmaniec2021measuring}. 

Coherence of quantum states is commonly described with respect to a specific choice of reference basis. Such a choice is normally well-understood in the paradigm of quantum computation, where high control of specific basis states is assumed. Nevertheless, it becomes somewhat arbitrary for states that decohere onto eigenstates of an unknown reference observable. Describing coherence as a basis-dependent quantity is not necessary, and basis-independent coherence can be properly defined for sets of states, in the so-called set-coherence approach{, introduced recently in Ref.}~\cite{designolle2021set}. 

Despite the foundational importance of quantum coherence, quantum superposition and entanglement, those nonclassical phenomena are not sufficient for passing stronger tests of nonclassicality such as violating Bell inequalities, or noncontextuality inequalities~\cite{werner1989quantum,spekkens2007evidence,hardy1999disentangling}. {Mach-Zehnder interferometers (MZIs)} are among the simplest interferometers, and thus a good test-bed for investigating the possibility of demonstrating stronger non-classicality criteria, clarifying when can we witness quantum contextuality in single qubit interference. In Ref.~\cite{catani2021interference} the authors constructed a noncontextual model that reproduces a set of interesting quantum phenomena using a MZI with symmetric beam-splitters and a single phase-shifter with phase $\phi = \pi$. 

In this work we show that we can witness and quantify basis-independent quantum coherence inside Mach-Zehnder interferometers (MZIs) using violations of known coherence-free inequalities, which can be probed using finitely many measurements, as opposed to visibility-based schemes that { rely} on continuously monitoring a quantum system. We also comment on the experimental benefits our approach has with respect to the estimation of basis-dependent coherence monotones~\cite{Streltsov17}.

Our approach allows us to propose \textit{a general way} for demonstrating contextual advantage for quantum information tasks, first envisioned in Ref.~\cite{Lostaglio2020contextualadvantage}. It will be possible to demonstrate advantage arising from coherence/contextuality when a particular task achieves, with a figure of merit, values above thresholds provided by our inequalities.  {Using the inequalities from Refs.~\cite{galvaobroad2020quantumandclassical,wagner2022inequalities} one can then link the figure of merit to such known inequalities to bound classical success rates. }As an instance of this methodology, we consider the case-study of quantum interrogation~\cite{elitzur1993quantum,vaidman1996interaction,kwiat1995interactionfree,kwiat1999high,rudolph2000better}, which exploits a MZI. Using our techniques, we show that quantum contextuality provides a quantifiable advantage for this task. 

{In this work, }we push further the analysis of contextuality of MZIs for any asymmetric beam-splitter and any choice of phase-shifter. {Using the inequality-based framework developed in  Refs.~\cite{galvaobroad2020quantumandclassical,wagner2022inequalities}}, we present experimentally accessible ways to witness and quantify the regime where the superposition created within the interferometer cannot be explained by noncontextual or coherence-free models. Our results show that symmetric beam-splitters in a MZI are the exception, and that virtually any other BS splitting ratio results in phenomena not explainable by a non-contextual model. Furthermore we discuss experimental implementations and their feasibility. 

{In summary, our main contributions are as follows: 
\begin{enumerate}
    \item We apply recently introduced inequalities that witness coherence  to the setting of Mach-Zehnder interferometers, while also remarking that most of our constructions are well-defined for more general interferometric settings. We focus on MZIs, as they may be argued to be the simplest interferometric device.
    \item We provide numerical and analytical results showing that the inequality-based formalism is well suited to experimentally assess quantum coherence inside interferometers.
    \item We present two simple and experimentally accessible settings to probe coherence inside interferometers (the parallel and sequential settings). The parallel setting avoids estimating the visibility and applies also to more general multimode interferometers, while the sequential setting is  suitable to demonstrating the advantage provided by contextuality, as we show. 
    \item We present a novel robust noncontextuality inequality that is capable of probing the ability of MZIs to generate contextual data, that is a generalization of the cycle inequalities found in Ref.~\cite{galvaobroad2020quantumandclassical}. 
    \item We highlight a general strategy for proving quantum advantages in interferometric experiments using the inequalities proposed in Refs.~\cite{galvaobroad2020quantumandclassical,wagner2022inequalities}, and showcase the applicability of this methodology to the task of quantum interrogation.
    \item We apply the known $3$-cycle inequalities from Ref.~\cite{galvaobroad2020quantumandclassical} to quantify contextual advantage in the task just mentioned. This solves an open problem proposed in Ref.~\cite{catani2021interference} of what statistics arising from interference, present in the bomb-testing gedanken experiment, defy explainability by noncontextual models. 
\end{enumerate}
}

This paper is organised in two main sections, the first of which provides the relevant theoretical basis for understanding the main results presented in the second. We start by reviewing the notion of contextuality, and the way in which it is possible to connect contextuality and coherence using the recently proposed event graph approach~\cite{wagner2022inequalities}. We briefly introduce the experimental device we analyze, a MZI, and the information task of quantum interrogation. In the results section afterwards, we first outline how to use coherence witness inequalities to quantify nonclassical behaviour inside the MZI, making a comparison with known results in literature. We then proceed describing how contextuality can be witnessed in those devices, and prove contextual advantage for the task of quantum interrogation.

\section{Background}

Contextuality is a  characteristic of probability distributions. There are three main proposals for defining contextuality. The standard definition, known as Kochen-Specker contextuality~\cite{Budroni21}, is built on the seminal work of Bell, Kochen and Specker~\cite{KochenS67,Bell64,Bell66} on the impossibility of noncontextual hidden-variable models for quantum systems of dimension larger than $2$. The second proposal is known as contextuality-by-default~\cite{dzhafarov2016contextuality,kujala2022contextuality,kujala2019measures}, which extended the analysis beyond physical systems~\cite{cervantes2018snow}. And finally, the generalized contextuality description, first proposed by Spekkens in Ref.~\cite{spekkens2005contextuality}. In this work, we take contextuality to be defined in the Spekkens sense only, a choice that will become clear later.

\subsection{Generalized Contextuality}\label{subsec: generalized contextuality}
We can theoretically and experimentally analyse generalized contextuality using two approaches, (i) via the characterization of operational-probabilistic prepare-and-measure contextuality scenarios with finite operational equivalences~\cite{schmid2018all,chaturvedi2021characterising,tavakoli2021bounding,schmid2018discrimination,Lostaglio2020contextualadvantage,kunjwal_anomalous_2019}, and (ii) via the characterization of a prepare-and-measure fragment of general probabilistic theories (GPTs)~\cite{schmid2021characterization,shahandeh2021contextuality,selby2021accessible,selby2022open}. In Ref.~\cite{spekkens2005contextuality}, Spekkens defined generalized noncontextuality as a property of ontological models~\cite{leifer2014isthequantum,liang2011specker} explaining an operational-probabilistic theory (OPT). We will only treat the former well-established approach and leave  connections and applications of the results for the latter to future work. 

The finite scenarios approach is a technique for selecting finite sets of accessible processes and equivalences in the OPT for probing generalized contextuality as originally proposed. We hereby review the necessary tools for understanding the operational content of prepare-and-measure contextuality scenarios, and how it is possible to probe contextuality in this setting. We follow Refs~\cite{schmid2018all,chaturvedi2021characterising,tavakoli2021bounding,schmid2018discrimination,Lostaglio2020contextualadvantage,lostaglio_certifying_2020,kunjwal2019beyondcabello,kunjwal_anomalous_2019}, mostly inspired by the informal description of operational theories from Ref.~\cite{spekkens2005contextuality}; see Ref.~\cite{schmid2020structure} for the distinction between OPTs and GPTs we will assume.

An OPT prescribes lists of operations, subdivided into preparations $\mathcal{P}$, transformations $\mathcal{T}$, and measurements $\mathcal{M}$ with associated effects{, i.e. ordered pairs of outcomes and their corresponding measurements,} $\{[k \vert M]\}_k${, with $M \in \mathcal{M}$ and $k$ some outcome of $M$}. Elements in these lists correspond to laboratory instructions to be done during an experiment, e.g. an element $P_0 \in \mathcal{P}$ can represent the instruction ``prepare a single photon in mode \textit{a}'' and another element $P_{0^\perp} \in \mathcal{P}$ can represent the instruction  ``prepare a single photon in mode \textit{b}'' (see Fig.~\ref{fig:MZI}). In theory, all these lists can have infinite elements. An OPT is also assumed to be convex~\cite{schmid2020structure} and to have consistent composition operations.  Using this framework we must also postulate  a probability rule between preparations followed by measurements $p(k\vert M,T,P)$. Indistinguishable operations of an OPT with respect to the rule $p$ are said to be equivalent. 

Formally, any two $P_1,P_2 \in \mathcal{P}$ are operationally equivalent if $p(k\vert M,T,P_1) = p(k \vert M,T,P_2)$ for all conceivable effects $\{[k \vert M]\}_k$, $M \in \mathcal{M}$ and transformations $T \in \mathcal{T}$ in the OPT, and we write $P_1 \simeq P_2$. Similarly for measurement effects{, any two effects $[k_1|M_1], [k_2|M_2]$ are said to be operationally equivalent if $p(k_1|M_1,T,P) = p(k_2|M_2,T,P)$ for all conceivable preparations $P \in \mathcal{P}$ and transformations $T \in \mathcal{T}$, in which case we write $[k_1|M_1] \simeq [k_2|M_2]$}. In words, equivalent processes are indistinguishable through the lenses of the operational theory. We will not consider operational equivalences between transformation procedures, since we will assume that operations of the form $T(P)$ correspond to valid new preparation procedures within the OPT, similar to Refs.~\cite{Lostaglio2020contextualadvantage,baldijao_emergence_2021}.

When quantum theory is viewed as an operational theory, quantum states $\rho \in \mathcal{D}(\mathcal{H})$, over some space $\mathcal{H}$, label equivalence classes of operationally equivalent preparation procedures, and POVM elements label equivalence classes of measurement effects. Hence, operational equivalences $P_1\simeq P_2$ are represented in quantum theory as equality between corresponding states $\rho_1 = \rho_2$, and $[k_1\vert M_1]\simeq [k_2\vert M_2]$ corresponds to equality of POVM elements $E_{k_1}^{M_1} = E_{k_2}^{M_2}$.

The operational theory does not attempt to explain the phenomenology of a given experiment, since it merely provides the set of rules for obtaining the statistical results thereof. Explanations for the success of an operational theory are given by the ontological models framework~\cite{liang2011specker,leifer2014isthequantum}. Such models are constructed by postulating the existence of a set $\Lambda$ of physical variables $\lambda \in \Lambda$, that is relevant for the explanations of the phenomena that constitute the elements of reality~\cite{EinsteinPR35} of the theory. Let us now introduce the ontological counterparts of the operational elements $P, T$ and $k \vert M $, i.e. functions acting over $\Lambda$. Preparations $P$ are associated with the probability $\mu_P$ that a given $\lambda$ is prepared, $\int_\Lambda \mu_P(\lambda)d\lambda = 1, \mu_P(\lambda) \geq 0$. Transformations $T$ label statistical changes from a given $\lambda$ into a different $\lambda'$, and are represented as stochastic transition matrices over $\Lambda$. Finally, measurement effects correspond to functions $\xi$ which describe, for each given $\lambda \in \Lambda$, probability distributions over all the outcomes of the measurement $M$, {$\sum_{k }\xi(k\vert M,\lambda)=1$, $\xi(k\vert M,\lambda) \geq 0$.} This causal description within the variables of $\Lambda$ explains the operational-probabilistic theory when it agrees with experiments, i.e. it satisfies
\begin{equation}
    {p(k \vert M,T,P) = \int \xi(k \vert M,\lambda)\Gamma_T(\lambda \vert \lambda')\mu_P(\lambda') d\lambda d\lambda'.}
\end{equation}

{ The probability $\mu_P$ plays a central role in quantum foundations. When $P$ corresponds to the preparation of some quantum state, say $\vert \psi \rangle$, the associated distributions $\mu_{\psi}$ are referred to as the \textit{epistemic states} of the ontological model. In such a case, epistemic states carry only some knowledge about reality, itself described by the ontic states $\lambda$. The theories for which quantum states are interpreted as epistemic states are generically known as \textit{epistemic interpretations of quantum mechanics}. For an overview of such a discussion we refer to Ref.~\cite{leifer2014isthequantum}. Some celebrated results in quantum foundations are presented in this language, such as Spekkens' noncontextual toy theory~\cite{spekkens2007evidence}, and the Pusey-Barrett-Rudolph (PBR) no-go theorem~\cite{pusey2012onthereality}. }

So far, ontological models as described can explain \textit{any} statistics arising from an OPT, be it classical, quantum, or even post-quantum.  Constraining ontological explanations to satisfy specific assumptions is then necessary to possibly create a gap between classical and quantum explanations. To do so, we constrain operationally equivalent procedures $P_1 \simeq P_2$. Distinctions in the labels, $1$ or $2$, of such equivalent procedures label the elements in the same equivalence class $P_1 , P_2 \in [P_1] = [P_2]$. Each choice of label corresponds to the context in which the procedure was made. For instance, the procedure $P_{(0\text{ or }1)}$ is described as ``flip a fair coin to decide if preparing a single photon in mode \textit{a} or in mode \textit{b}'', and the procedure $P_{(+\text{ or }-)}$ is described as ``flip a fair coin to decide if preparing a single photon in mode \textit{a} that later passes through a symmetric BS, or in mode \textit{b} that later passes through a symmetric BS'', see Fig.~\ref{fig:MZI} for this example. In quantum theory the first procedure prepares the equal mixture of states $\frac{1}{2}\vert 0 \rangle \langle 0 \vert + \frac{1}{2}\vert 1 \rangle \langle 1 \vert = \frac{1}{2}\mathbb{1}$, and the second prepares the equal mixture of states $\frac{1}{2}\vert + \rangle \langle + \vert + \frac{1}{2}\vert - \rangle \langle - \vert = \frac{1}{2}\mathbb{1}$. These two procedures are operationally equivalent, since for every POVM element we will have $p(k\vert M, P_{(0\text{ or }1)}) = \text{Tr}(E_k^M \frac{1}{2}\mathbb{1}) = p(k\vert M, P_{(+\text{ or }-)})$. The operational procedures were nevertheless performed by different elements in the OPT. 

It can be argued that this difference in labels is not acceptable from a classical perspective~\cite{spekkens2005contextuality,spekkens2019ontological}, and disregarding labels of operationally equivalent processes is the relevant constraint over ontological models to be deemed classical. That is precisely what defines the notion of noncontextuality. Formally, we say that an ontological model is preparation noncontextual if
\begin{equation}\label{eq: prep noncontextuality}
    P_1 \simeq P_2 \implies \mu_{P_1}(\lambda) = \mu_{P_2}(\lambda),\quad \forall \lambda \in \Lambda.
\end{equation}
and measurement noncontextual if {
\begin{align}\label{eq: measurement noncontextuality}
    &[k_1\vert M_1] \simeq [k_2 \vert M_2]\nonumber\\
    &\implies \xi(k_1\vert M_1,\lambda)=\xi(k_2\vert M_2,\lambda),\quad \forall \lambda \in \Lambda.
\end{align}
}
The noncontextual constraint just described is consistent with classical mechanics~\cite[Supp. Material A]{lostaglio_certifying_2020}, and with the emergence of classical objectivity~\cite{baldijao_emergence_2021}. Moreover, it is grounded by philosophical desiderata~\cite{spekkens2019ontological}, is experimentally robust~\cite{mazurek2016experimental,mazurek2021experimentally}, reduces to Kochen-Specker noncontextuality under specific conditions, see~\cite[Section 1.3.2]{kunjwal2016contextuality} or~\cite{leifer2013maximally}, and generalises to the GPT framework~\cite{schmid2021characterization,shahandeh2021contextuality}. When there exists no noncontextual ontological model explaining the given OPT, we say that the theory is contextual. 

When viewed as an operational theory, quantum theory is contextual for preparations, transformations and measurements~\cite{spekkens2005contextuality,banik2014ontological,lillystone2019contextuality,kunjwal2019beyondcabello}. OPTs have possibly infinitely many procedures and infinitely many operational equivalences; each prepare-and-measure contextuality scenario, for which we write $\mathbb{B}$, corresponds to finite instances of the underlying OPT being experimentally probed. A  scenario has finite sets of preparations $\{P_x\}_{x \in \underline{X}}$, $\underline{X} := \{1,2,\dots,X\}$, measurements $\{M_y\}_{y \in \underline{Y}}$, and outcomes $\{k\}_{k \in \underline{K}}$, as well as \textit{finite} sets of operational equivalences for the preparations $\mathcal{OE}_P$ and for the measurement effects $\mathcal{OE}_M$. We will represent these scenarios with tuples $\mathbb{B} = (X,Y,K,\mathcal{OE}_P,\mathcal{OE}_M)$.  

Such scenarios have had their noncontextual characterization fully described in terms of complete sets of inequalities~\cite{schmid2018all}, were applied to proofs of contextual advantage~\cite{schmid2018discrimination,Lostaglio2020contextualadvantage,lostaglio_certifying_2020}, investigated using resource theoretic approaches~\cite{Duarte18,wagner2021using}, and fully analysed in terms of the quantum set of correlations using semi-definite programming techniques~\cite{chaturvedi2021characterising,tavakoli2021bounding} inspired by similar approaches to quantum non-local correlations~\cite{navascues2007bounding}. Since $\mathbb{B}$ is characterized by finite sets, the rule $p$ applied to the prepare-and-measure elements of $\mathbb{B}$ defines tuples $\mathbb{R}^{XYK} \ni \mathbf{p} = (p(k \vert M_y,P_x))_{k \in \underline{K}, x \in \underline{X}, y \in \underline{Y}}$, usually called \textit{behaviours}. The set of noncontextual behaviours $\mathcal{NC}$, for a given scenario $\mathbb{B}$, corresponds to the set of all points $\mathbf{p}$ that can be explained by some noncontextual ontological model. { This set forms a convex polytope~\cite{schmid2018all}~\footnote{{We review some basic aspects of the theory of convex polytopes in Sec.~\ref{subsec: event graphs}.}}.} The set of all quantum behaviours $\mathcal{Q}$, for a given scenario $\mathbb{B}$, corresponds to all points $\mathbf{p}$ that can be represented using quantum theory as an operational theory, i.e. there exist sets of states $\{\rho_x\}_{x \in \underline{X}}$ and effects $\{\{E_k^y\}_{k \in \underline{K}}\}_{y \in \underline{Y}}$ satisfying the operational equivalences of $\mathbb{B}$ such that $p(k \vert M_y,P_x) = \text{Tr}(\rho_x E_k^y)$. 

For connecting such operational scenarios with coherence and contextuality inside the interferometers, we will show a parallel between those and the inequalities that were first proposed to rule out coherence-free models for quantum states. In the following, we introduce the inequalities that witness basis-independent coherence for a given set of quantum states.

\subsection{Event graph formalism}\label{subsec: event graphs}
\begin{figure}[bt]
    \centering
    \includegraphics[width=0.2 \textwidth]{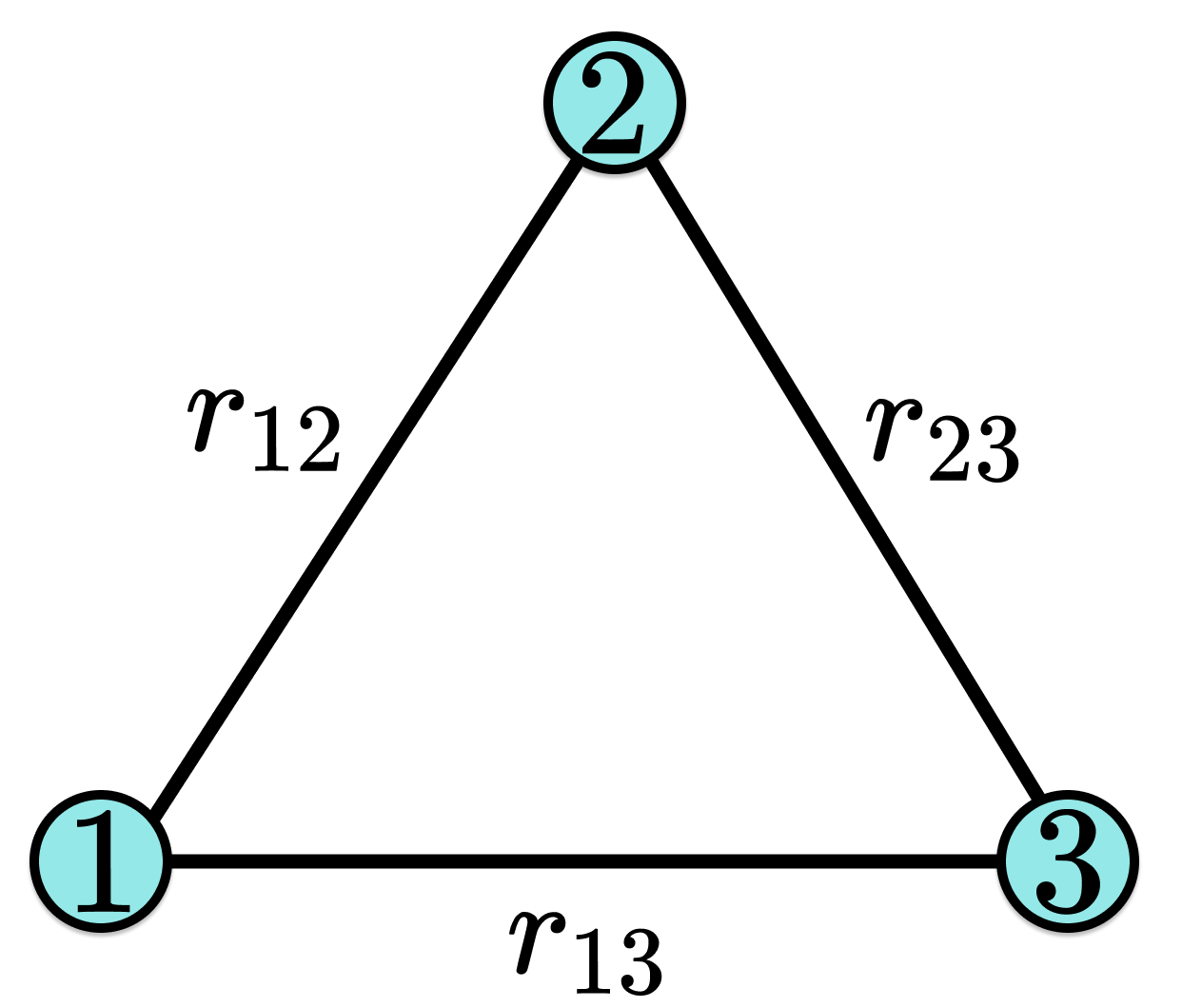}
    \caption{\textbf{Simplest non-trivial event graph.} The $C_3$ cycle graph is an event graph with possible edge weights $r_{12},r_{13},r_{23} \in [0,1]$. Classical weights are convex combinations of deterministic assignments satisfying transitivity of equality.}
    \label{fig: simplest event graph}
 \end{figure}
{ In this section, we review the} event graph approach introduced in Ref.~\cite{wagner2022inequalities} for witnessing basis-independent coherence, contextuality, and nonlocality, which builds on  results by Boole~\cite{Boole1854}, $n$-cycle inequalities~\cite{araujo2013all},  two-state overlap inequalities~\cite{galvaobroad2020quantumandclassical}, and well-established graph-approaches to contextuality~\cite{amaral2018graph,cabello2014graph}. {In this formalism we use graph theory, and the theory of convex polytopes, to obtain inequalities whose violation witnesses coherence based solely on the statistics of two-state overlaps. Let us first introduce the mathematical tools needed for the presentation of the event graph approach: graph theory and the theory of convex polytopes. 

A graph $G$ is an ordered pair $(V,E)$ of two sets. Elements of $V$ are called \textit{vertices} or \textit{nodes} of the graph, while  $E$ is a set of pairs $e \equiv \{v,w\}$, with $v,w \in V$, and which  $e \in E$ elements are called \textit{edges} of the graph. A graph is said to be finite if $|V|<\infty$ and complete if $E = \{\{v,w\}\,|\, \forall v,w \in V\}$, i.e., $E$ has every possible pair of elements from $V$. It is possible to depict graphs as we do in Fig.~\ref{fig: simplest event graph}, where each element from $V$ is depicted as a node (circle) while each element $\{v,w\}$ of $E$ is depicted as a line drawn between nodes $v$ and $w$. A graph is said to be fully connected if there exists a way to move along any two vertices of $V$ using the edges of the graph. Graphs described as above are sometimes called \textit{simple}, as the edges are undirected (i.e., $\{v,w\} = \{w,v\}$), nodes have no loops (i.e., $\{v,v\} \notin E$), and two nodes can pertain to at most one edge (i.e. $\forall e_1,e_2 \in E$ such that  $v,w \in e_1 \cap e_2$ with $v \neq w$, then $e_1 = e_2$). We say that $G' = (V',E')$ is a subgraph of $G = (V,E)$ if $V' \subseteq V$ and $E' \subseteq E$, and denote the set of all possible subgraphs of $G$ as $\text{sub}(G)$. Finally, we say that two graphs $G_1,G_2$ are isomorphic, and write $(V_1,E_1) = G_1 \simeq G_2 = (V_2,E_2)$ if there exists a bijective function $f: V_1 \to V_2$ such that $\forall v_1,u_1 \in V_1$, $\{v_1,u_1\} \in E_1 \iff \{f(v_1),f(u_1)\} \in E_2$.  We will refer to fully connected simple graphs as \textit{event graphs}.}

{ The most important example of event graphs in our work will be the \textit{$n$-cycle graphs} $C_n$, where $n$ represents the number of vertices of the graph, with $V = \{1,2,\dots,n\}$. Edges in these graphs are defined by $E := \{i,i+1\}_{i \in V}$, where summation is defined module $n$. The simplest such case is depicted in Fig.~\ref{fig: simplest event graph}, showing the $3$-cycle graph. }

{ To each event graph it is possible to associate a specific region of the real space $\mathbb{R}^N$, with $N=|E|$, known as a convex polytope. Given some finite set $S\subset \mathbb{R}^N$ of tuples, for some $N \in \mathbb{N}$, \textit{convex polytopes} are defined as the convex hull $\text{ConvHull}(S)$ of this set, $$\left\{\sum_{s \in S}\alpha_s s \in \mathbb{R}^N \, : \, \sum_{s \in S}\alpha_s = 1, \\ 0 \leq \alpha_s \leq 1, \forall s \in S\right\},$$ which is the so-called V-representation of the polytope. Such polytopes are the higher-dimensional generalization of polygons. Equivalently, the same finite convex subset of $\mathbb{R}^N$ can be described in terms of a finite collection of  closed half-spaces $H := \{r \in \mathbb{R}^N: h(r) \leq b, b \in \mathbb{R}\}$, where $h:\mathbb{R}^N \to \mathbb{R}$ is a linear functional. Each $h(r) \leq b$ defines the so-called facet inequalities for the convex polytope. This collection of half-spaces is referred to as the H-representation. In our work, all the inequalities that we will use to quantify coherence inside interferometers correspond to facet-defining inequalities of a certain convex polytope that we now proceed to define. }

To any event graph $G$ we assign weights $r_e \in [0,1]$ to the edges of $G$. The full set of possible edge {weights} forms the  convex polytope $[0,1]^{|E(G)|}$. {The event graph framework bounds coherence and contextuality depending on the possible \textit{realizations} of these weights, that arise under a specific encoding of quantum information into the event graph. We define a \textit{quantum realization} for a given event graph $G$ with weights $r=(r_e)_e$ whenever we can associate the vertices of the graph $V$ to density matrices $V \ni v \leftrightarrow \rho_v \in \mathcal{D}(\mathcal{H})$ over \textit{some} Hilbert space $\mathcal{H}$, such that $r_{e} = \text{Tr}(\rho_i\rho_j)$,  $\forall e=\{i,j\}$. } 

{At this point, quantum realizations simply constitute a mathematical construction, lacking an operational interpretation. They may not exist, as for instance for the edge-weights $(1,0,1)$ in the $3$-cycle graph $C_3$ of Fig.~\ref{fig: simplest event graph}. In general, there can be an infinite number of different realizations for the same set of edge-weights. For example, the triple of weigths $(1,1,1)$, again defined with respect to the graph $C_3$, can be realized making all  nodes equal to $\vert \psi \rangle \langle \psi \vert $, for any $\vert \psi \rangle \in \mathcal{H}$. We will consistently use the same terminology for quantum realizations in the coming sections~\ref{subsec: parallel setting}, ~\ref{subsec: sequential setting} and ~\ref{subsec: Generic advantages}.

The way in which one can experimentally \textit{infer} quantum realizations may differ. For consistency, in our work all the two-state overlaps will be operationally interpreted as arising from prepare-and-measure statistics, with 
\begin{equation*}
p(0|M_\sigma,P_\rho) = \text{Tr}(\rho \sigma)
\end{equation*} obtained from preparing the state $\rho$, denoted by $P_\rho$, and measuring $M_\sigma := \{\sigma, \mathbb{1}-\sigma\}$ with outcome $\sigma$, denoted as $[0|M_\sigma]$. It is still worth pointing out that the interpretation of coherence witnesses from Sec.~\ref{subsec: parallel setting} remains valid if overlaps are estimated using different techniques, e.g., Hong-Ou-Mandel test statistics in photonics ~\cite{giordani2021witnessesofcoherence}.}

{We will quantify coherence from the basis-independent perspective~\cite{designolle2021set,galvaobroad2020quantumandclassical}, that considers coherence as a property of a set of states, defined as follows: Let $\{\rho_i\}_i$ be a finite set of quantum states defined over the same finite Hilbert space $\mathcal{H}$. We say that this set is \textit{coherence-free}, or \textit{set-incoherent}, if all states in the set are mutually commuting operators. Otherwise, we say that such an ensemble of states is \textit{set-coherent}. Equivalently, a set of states is coherence-free if there exists some basis $\Omega$ with respect to which all states can be represented as diagonal density matrices.
}

{Let us consider the quantum realization of a given event graph for which the set of states is coherence-free. In this case, one can still use the same prepare-and-measure protocol to estimate the two-state overlaps, but the existence of a reference observable allows also for a different measurement protocol for estimating the overlaps. }If all the states are incoherent with respect to some basis $\Omega$, then $\rho_i = \sum_{\omega \in \Omega} p(\omega \vert i)\vert \omega \rangle \langle \omega \vert $, which implies \begin{equation}\label{eq: classical overlap}
    r_{ij} = \sum_{\omega \in \Omega} p(\omega \vert i)p(\omega \vert j).
\end{equation} Any such $r_{ij}$ could be interpreted as the probability that, when the two states are measured with the basis $\Omega$, they return equal outcomes. This means that we can see incoherent sets of states implying two-state overlaps, such that $r_{ij} = p(v(\Omega)_{\rho_i} = v(\Omega)_{\rho_j})$, where $v(\Omega)_{\rho_i}$ is the value returned by $\Omega$ measured over $\rho_i$. 

{ The set of all possible overlap tuples $r = (r_e)_{e \in E}$ associated with a given event graph $G=(V,E)$ that are realized by set-incoherent quantum states in the nodes of the graph form an associated convex polytope $C_G$. We now describe the construction of this polytope, for each event graph $G$.}

{
Let $\mathbb{H}$ denote the vertices of the polytope $[0,1]^{|E|}$, defined as all possible $0/1$ edge-weight assignments to an event graph $G$. We call such assignments for the edge-weights $r = (r_e)_e$ deterministic. If there exists at least one edge $e' \in E$ such that $r_e \in (0,1)$ we say that the weight $r$ is probabilistic. Let \begin{align*}
\mathcal{L}(G) &:= \{ E' \subseteq E : \exists n \in \mathbb{N}_{\geq 3},\\&\exists G' \in \text{sub}(G) \text{ such that } G' \simeq C_n\}
\end{align*}denote the set of all $E'$ from subgraphs $G'=(V',E') \in \text{sub}(G)$  isomorphic to some cycle graph $C_n$, for some $n\geq 3$. Denoting $r\vert_{\ell}$ the restriction of the edge-weights $r: E \to [0,1]$ onto some subset of edges $\ell \in \mathcal{L}(G)$ and define the vertex set $\mathbb{V}_G$ as
\begin{equation}
    \mathbb{V}_G := \{r \in \mathbb{H}: |\text{root}(r|_{\ell})| \neq 1, \forall \ell \in \mathcal{L}(G)\}
\end{equation} 
where $\text{root}(r) := \{e \in E: r_e = 0\}$. In words, $\mathbb{V}_G$ has all deterministic edge-assignments satisfying that for any cycle $\ell$ of $G$ there is no single $0$ edge-assignment. For instance, $\mathbb{V}_{C_3}$, with $C_3$ the graph shown in Fig.~\ref{fig: simplest event graph},  is given by,
\begin{equation*}
    \mathbb{V}_{C_3}:=\left\{ \left(\begin{matrix}1\\1\\1\end{matrix}\right), \left(\begin{matrix}0\\0\\0\end{matrix}\right), \left(\begin{matrix}1\\0\\0\end{matrix}\right),\left(\begin{matrix}0\\1\\0\end{matrix}\right), \left(\begin{matrix}0\\0\\1\end{matrix}\right) \right\}
\end{equation*}
In the event graph framework, the convex polytope $C_G$ is defined by 
\begin{equation}
    C_G := \text{ConvHull}(\mathbb{V}_G).
\end{equation}
Since $\mathbb{V}_G \subseteq \mathbb{H}$ for all event graphs $G$, the polytope $C_G$, for any event graph $G = (V,E)$, is a subpolytope of the hypercube $[0,1]^{|E|}$. This is simply because while the hypercube is the convex hull of all possible $2^{|E|}$ $0/1$ edge-assignments for $G$, the polytope $C_G$ will be defined as the convex hull of only a subset of possible $0/1$ edge-assignments. The possible ones will be those that satisfy the hypothesis of quantum realization by some set of incoherent states, with respect to some Hilbert space dimension, described by Eq.~\eqref{eq: classical overlap}.
}

{
While the description just given for the convex polytopes $C_G$ is formal, it does not provide an intuitive understanding of why any quantum realization with incoherent states should return overlap tuples $r \in C_G$. For that, notice that any incoherent quantum realization for the overlaps of the form given by Eq.~\eqref{eq: classical overlap} satisfies the property of \textit{transitivity of equality}, which is a logical consistency condition necessary for classical probability distributions. This condition provides intuition on why we consider loops $\ell \in \mathcal{L}(G)$ to construct $C_G$. 

As we have shown, for an incoherent quantum realization two-state overlaps can also be interpreted as $r_{ij} = p(v(\Omega)_{\rho_i} = v(\Omega)_{\rho_j})$, which is the probability that upon measuring $\Omega$ independently on each incoherent state in the nodes $i$ and $j$ the outcomes were equal. In case we have an incoherent quantum realization satisfying $r_{ij}=1$, we must conclude that the two states are indistinguishable under $\Omega$. Let $\ell \in \mathcal{L}(G)$ for any event graph $G$, if $|\text{root}(r|_{\ell})| = 1$ and $r$ is deterministic, there exists a single edge $\{i,j\} \in \ell$ such that $r_{ij}=0$, while all others are equal to 1. Under the hypothesis that $r$ is realized by incoherent states, $|\text{root}(r|_{\ell})| = 1$ implies a logical contradiction, as we must conclude that all states are indistinguishable, while due to $r_{ij}=0$ the states $\rho_i$ and $\rho_j$ must be different.  
} For instance, with reference to Fig.~\ref{fig: simplest event graph}, an assignment $(1\,1\,0)$ is impossible for incoherent states, as it leads to a logical contradiction{, since it would imply that states $\rho_1,\rho_2$ are indistinguishable, and the same for  $\rho_1,\rho_3$. By transitivity, we must conclude that $\rho_1,\rho_3$ are also indistinguishable, but this is in contradiction with the fact that $r_{23}=0$}.

Any set of quantum states realizing edge-weights $r$ outside $C_G$ is then {set-coherent, i.e. the set of states is coherent} \textit{with respect to any possible basis}. This formalism for witnessing coherence was first proposed in Ref.~\cite{galvaobroad2020quantumandclassical}, and it was experimentally verified in Ref.~\cite{giordani2021witnessesofcoherence}.

By construction, { the set of allowed points $r = (r_e)_e$ for any given event graph realized by incoherent states corresponds to the proper subset $C_G$ of $[0,1]^{|E|}$, characterized in terms of overlap inequalities  (from which we select some specific ones to study and that we present later on in this section, namely, in Eqs.~\eqref{eq: c3 ineq 1}-\eqref{eq: c3 ineq 3}, Eq.~\eqref{eq: n-cycle inequalities} or Eq.~\eqref{eq: inequality (d)}). In this manner, violations of the aforementioned inequalities constitute witnesses of basis-independent coherence for any set of states realizing the corresponding edge-weights. To conclude, it is also worth mentioning that} there is no dimension constraint as well; coherence for any set of high-dimensional states can be probed using these inequalities.

Furthermore,  Ref.~\cite{wagner2022inequalities} showed that the inequalities associated to cycle graphs from the event graphs are preparation noncontextuality inequalities for specific prepare-and-measure scenarios. Of particular importance for our purposes is the $C_3$ graph shown in Fig.~\ref{fig: simplest event graph}. Its facets correspond to the { $3$-cycle }inequalities
\begin{align}
    &+r_{12}+r_{13}-r_{23} \leq 1,\label{eq: c3 ineq 1}\\
    &+r_{12}-r_{13}+r_{23} \leq 1,\label{eq: c3 ineq 2}\\
    &-r_{12}+r_{13}+r_{23} \leq 1,\label{eq: c3 ineq 3}
\end{align}
which are noncontextuality inequalities under considerations to be clarified in the results section. { We refer to the $n$-cycle inequalities as the facet-defining inequalities of the convex polytope $C_{C_n}$ obtained from the event graphs $C_n = (V(C_n), E(C_n))$, for any $n\geq 3$, defined for any fixed $e' \in E(C_n)$ as
\begin{equation}\label{eq: n-cycle inequalities}
    -r_{e'} + \sum_{\begin{array}{c}e \in E(C_n)\\ e\neq e' \end{array}}r_e \leq n-2.
\end{equation}
These two-state overlap inequalities were first introduced in Ref.~\cite{galvaobroad2020quantumandclassical}, but their derivation relate to other well-known $n$-cycle inequalities in contextuality theory~\cite{araujo2013all}.}

The results of Ref.~\cite{wagner2022inequalities} are even stronger: \textit{every} inequality of $C_G$ for \textit{every} choice $G$ can be viewed as a Kochen-Specker noncontextuality inequality for some scenario. We choose to only focus on Spekkens contextuality because we want to study \textit{single qubit interference}, for which the Kochen and Specker approach has a noncontextual explanation~\cite{KochenS67}. 

Concretely, we focus on the interpretations of $C_3$ as basis-independent coherence witness and \textit{generalized noncontextual} bounds because, (i) it is the simplest non-trivial inequality that has a clear link to generalized contextuality, (ii) a single notion of contextuality aids consistency and readability, (iii) it allows to analyse single-qubit interference, for which the Kochen-Specker theorem does not apply, (iv) { it allows } to probe nonclassicality focusing on states/preparations, and (v) {  also allows} quantum states to be mixed, due to experimental imperfections. This choice marks a conceptual separation between inequalities that we will use to probe contextuality and inequalities we will use to probe coherence. We will focus on $C_3$ inequalities interpreted as generalized noncontextuality bounds and general $C_G$ inequalities as basis-independent coherence witnesses. 

There are several inequalities that might be used to probe coherence instead of cycle inequalities. Interesting inequalities should use a small number of states and overlaps, have high violations, and be achievable with qubits, as this allows for violations using the two arms of a MZI. For three states, the only inequalities are the triangle inequalities in Eqs~\eqref{eq: c3 ineq 1}, \eqref{eq: c3 ineq 2} and \eqref{eq: c3 ineq 3},
with violations up to a maximum of $0.25$. Since for four states the only non-trivial non-cycle inequality does not have qubit violations~\cite{wagner2022inequalities}, for probing larger violations using few states one needs to consider five-state inequalities. For instance, if $G = K_5$ is the complete graph of five nodes, the following inequality{, first derived in Ref.~\cite{wagner2022inequalities},} satisfies all the desired criteria { outlined above}, as we shall show later { in Sec.~\ref{subsec: parallel setting}},
\begin{align}
    &r_{12}+r_{15}+r_{23}+r_{34}+r_{45}\nonumber\\
    &\hspace{3em}-r_{13}-r_{14}-r_{24}-r_{25}-r_{35}\leq 2. \label{eq: inequality (d)}
\end{align}

We can quantify the amount of basis-independent coherence by using the degree of violation of those inequalities~\cite{giordani2020experimental}. Inequality violations have long been used for quantifying nonclassicality, and can be made rigorous lower bounds for distance-based resource theoretic monotones, similar to what was done in Ref.~\cite{brito2018quantifying} (a brief comment on this issue can be found in Appendix A).

\subsection{Mach-Zehnder Interferometer}\label{subsec: MZI}

A Mach-Zehnder interferometer (MZI) is a particularly simple device capable of demonstrating the wave-like behavior of photons~\cite{loudon2000quantum,zetie2000does}. This device is present in many discussions in quantum foundations~\cite{hardy1992existence,elitzur1993quantum,catani2021interference}. In its standard configuration, a MZI is made of two beam-splitters (BSs) and two mirrors, with optical paths of equal length. Quantum information is encoded on the photons' path, and single photon interference is captured by variations of a phase-shifter (PS) tuned inside one of the arms. 

\begin{figure}[tb]
    \centering
    \includegraphics[width=\columnwidth]{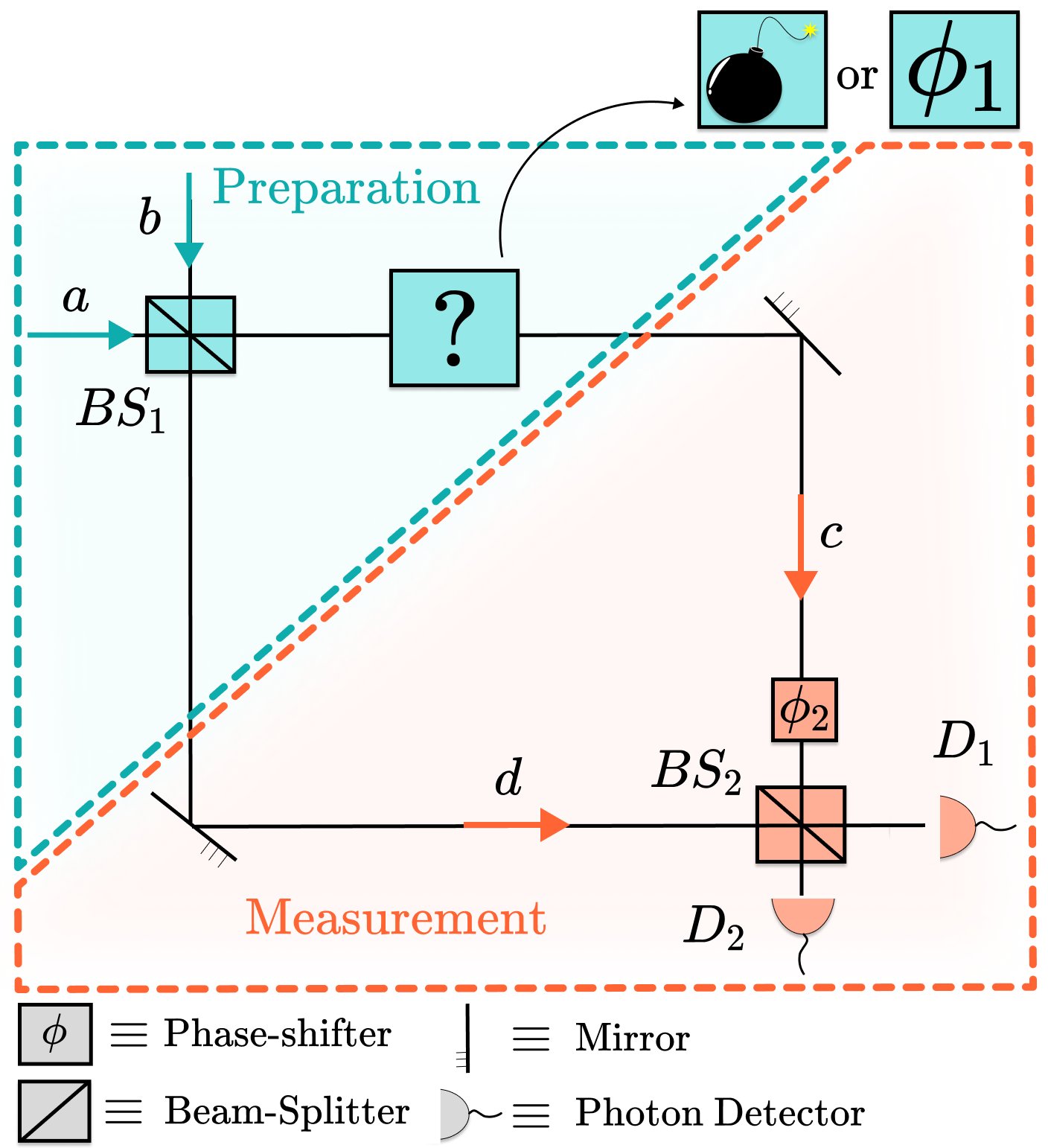}
    \caption{\textbf{Mach-Zehnder Interferometer as a Prepare-and-measure experiment.} In the preparation stage the interferometer is fed with a single photon. The first beam-splitter, $BS_1$ generates superposition between the spatial modes. Qubit path encoding is chosen as $\ket{0}$ and $\ket{1}$ for modes \textit{a} and \textit{b}, respectively. The $?$-box allows for either the presence of a phase-shifter $\phi_1$, placed after $BS_1$, or a bomb that explodes in case it absorbs a photon, present in the standard quantum interrogation scheme. The measurement stage is made of a phase-shifter $\phi_2$, and a second beam-splitter $BS_2$, with photo-detectors placed after each output mode.}
    \label{fig:MZI}
\end{figure}

To build a connection with contextuality scenarios, we can interpret the MZI as a prepare-and-measure device. The MZI apparatus is depicted in Fig.~\ref{fig:MZI}, for the case where the $?$-box represents a PS $U_{\phi_1}$. The two BSs are defined with a fixed $\pi/2$ phase-shift between the optical modes and tunable transmissivity,
\begin{equation}\label{eq: BS}
    U_{\theta_k} := \left(\begin{matrix}\cos\theta_k&i\sin\theta_k\\i\sin\theta_k&\cos\theta_k\end{matrix}\right),
\end{equation}
where $k=1,2$. In this configuration, the \textit{preparation} stage plays the role of a universal one-qubit state generator. Given the orthonormal basis $\{\ket{0},\ket{1}\}$, corresponding to the upper and the lower optical paths respectively, we can encode any pure state using a single photon that enters $BS_1$, for instance in mode \textit{a}:
\begin{align}\label{eq: state inside the interferometer}
    \ket{\psi(\theta_1,\phi_1)} = e^{i\phi_1}\cos\theta_1 \ket{0}+i\sin\theta_1\ket{1}.
\end{align}
Similarly, the \textit{measurement} stage is made of a PS $U_{\phi_2}$ and a BS with photo-detectors at both outputs, a configuration that allows projection onto any qubit state. For instance, if we choose values $\theta_2,\phi_2$ such that  $U_{\phi_1}U_{\phi_2}=\mathbb{1}$ and $U_{\theta_1}U_{\theta_2}=\mathbb{1}$, the measurement stage performs exactly the conjugated operation of the preparation stage, and thus a measurement that exactly corresponds to a projector onto state $\ket{\psi(\theta_1,\phi_1)}$. In the ideal case, detector $D_1$ will click with probability $1 = \vert \langle \psi(\theta_1,\phi_1)\ket{\psi(\theta_1,\phi_1)} \vert^2$, while detector $D_2$ will click with probability $0$. In such a case, the measurement stage performs a dichotomic measurement $M=\{\ket{\psi(\theta_1,\phi_1)}\bra{\psi(\theta_1,\phi_1)}, \mathbb{1}- \ket{\psi(\theta_1,\phi_1)}\bra{\psi(\theta_1,\phi_1)}\}$. It is worth noting that this same measurement $M$ can also perfectly distinguish between the case with an input photon in mode \textit{a}, and the one in mode \textit{b}. If we input the photon in mode \textit{b}, detector $D_1$ will never click, as it is the orthogonal state $\ket{\psi(\theta_1,\phi_1)^\perp}$ the one prepared inside the interferometer.

It is interesting to look at the case where the two BSs are characterized by \textit{different} parameters $\theta$ and $\phi$. By interpreting the measurement stage as a time-reversed one-qubit state generator for a pure state $\ket{\psi(\theta_2,\phi_2)}$, the overall action of the interferometer is to project the state prepared in the first stage onto the state prepared in the second
\begin{align}
    |\bra{0} U_{BS_{2}}U_{\phi_2}U_{\phi_1}U_{BS_1}\ket{0}|^2 =\hspace{0.5cm}\nonumber\\= |\langle \psi(\theta_2,\phi_2)\vert \psi(\theta_1,\phi_1) \rangle|^2.\label{eq: PM MZI as overlaps}
\end{align}
This perspective allows to interpret the MZI as a natural device for estimating quantum two-state overlaps from the frequency of clicks in the detectors $D_1, D_2$, given various choices of PSs and BSs. { As we have seen in Sec.~\ref{subsec: event graphs}, we can infer set-coherence for the collection of states prepared by the MZI using precisely the overlap statistics described by Eq.~\eqref{eq: PM MZI as overlaps}. We will use this precise understanding in the upcoming Secs.~\ref{subsec: parallel setting} and~\ref{subsec: sequential setting} to witness coherence and contextuality.}

One of the several important applications for such a simple linear optics device consists of performing the standard quantum interrogation task~\cite{elitzur1993quantum,vaidman1996interaction}. For this task, in the $?$-box from Fig.~\ref{fig:MZI} is placed an object, typically chosen to be a bomb for historical reasons~\cite{elitzur1993quantum}, that ``explodes'' if and only if it interacts with a single photon. We choose $U_{\theta_1}=U_{\theta_2}^\dagger, \theta_1=\theta_2=\theta$ and $\phi_2=0$, so that the single photon superposition is controlled only by the BSs and one of the detectors is always dark when the bomb is not present. In this scenario, we input the photon in mode \textit{a} (Fig.~\ref{fig:MZI}). The experiment thus consists of testing between two possibilities: \textit{Hypothesis 1) There is a bomb in the left arm of the MZI, but it is not active}, hence it never interacts with the photon,  detector $D_1$ will always click, and detector $D_2$ will never click (usually referred to as the dark detector), or \textit{Hypothesis 2) There is an active bomb in the left arm of the MZI}, and therefore the photon will hit the bomb with probability $\cos^2(\theta)$, detonating it, while it will choose the different path with probability $\sin^2(\theta)$. The latter case, corresponding to a state $\ket{1}$ after the first BS, gives a chance to detect the object/bomb without exploding it, exploiting the bomb as a complete path-information measurement device. In fact, after $\vert 1 \rangle$ passes the second BS, detector $D_1$ will click with probability $\sin^2(\theta)$, and detector $D_2$ will click with probability $\cos^2(\theta)$. However, $D_2$ clicks \textit{only} in the case of an unexploded active bomb, thus, detecting the presence of the object. With this protocol in mind we define the following information task:
\begin{tcolorbox}[colback=blue!6,colframe=blue!35,left=1mm,right=1mm,top=1mm,bottom=1mm,arc=0mm]
\textit{\textbf{Quantum interrogation:}} Using as many photons as needed, detect the presence of an active bomb \textit{without} exploding it, with the highest possible probability.
\end{tcolorbox}

If the BSs are symmetric, detector $D_2$ clicks with probability $1/4$. Therefore we may need to run the experiment many times to see any $D_2$ event, having the drawback that, by doing so, eventually the bomb will explode. The figure of merit for the efficiency of this task $\eta$ is defined operationally as~\cite{elitzur1993quantum,kwiat1995interactionfree},
\begin{equation}\label{eq: efficiency}
    \eta = \frac{p_{succ}}{p_{succ} + p_{bomb}}
\end{equation}
where $p_{succ}$ corresponds to the probability that the dark detector $D_2$ clicks and we successfully detect the bomb without it exploding, while with probability $p_{bomb}$ the bomb explodes. In case of symmetric BSs, as pointed out in Ref.~\cite{elitzur1993quantum}, we have $\eta = \frac{1/4}{1/4+1/2} = \frac{1}{3}$. The same efficiency can be achieved by the noncontextual model of Ref.~\cite{catani2021interference}.

\section{Results}

The event graph formalism for witnessing basis-independent coherence helps quantify non-classical superpositions from two different perspectives. We call these two perspectives the \textit{parallel setting} and the \textit{sequential setting}, as schematically depicted in Fig.~\ref{fig: parallel vs sequential}.

\subsection{Parallel setting}\label{subsec: parallel setting}
\begin{figure}[tb]
    \centering
    \includegraphics[width=\columnwidth]{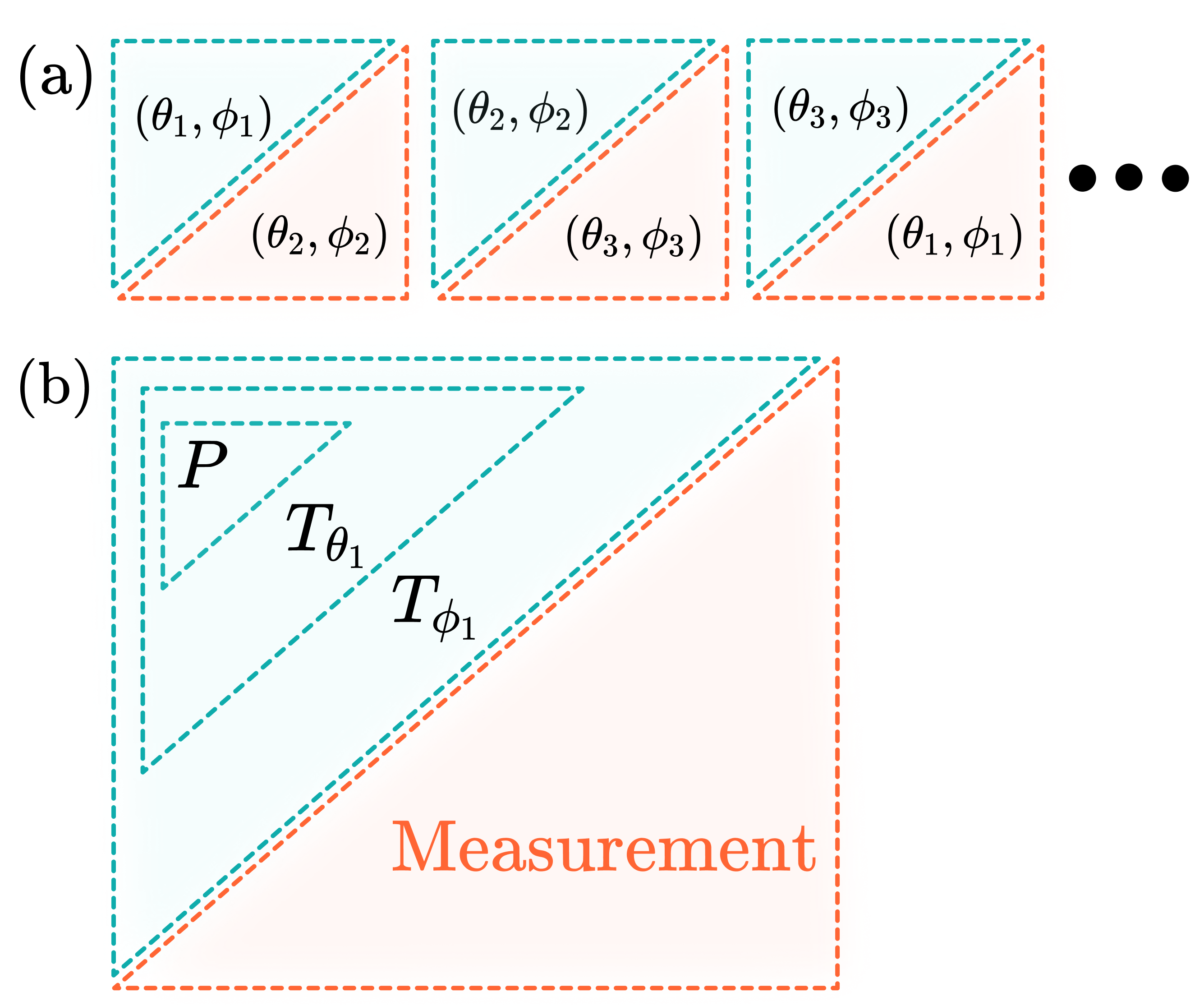}
    \caption{\textbf{Parallel versus sequential settings.} We follow the prepare-and-measure scheme for representing the Mach-Zehnder interferometer as in Fig.~\ref{fig:MZI}. (a) We choose a finite number of MZI configurations to probe coherence of quantum states inside the interferometer. Measurements correspond to dual representations of the chosen quantum states. (b) We interpret a single MZI configuration as capable of preparing different states due to BS and PS transformations. The measurement stage performs tomographically complete measurements.}
    \label{fig: parallel vs sequential}
\end{figure}

In a parallel setting, shown in Fig.~\ref{fig: parallel vs sequential}-(a), we investigate coherence as a relational property of the several different states that can be generated inside the MZI, for various choices of BSs and PSs.
The coherence of the states $\vert \psi(\theta_i,\phi_i) \rangle $ inside the interferometer can be witnessed and quantified by measuring only a finite number of two-state overlaps. This procedure is capable of evidencing basis-independent coherence, a stronger form than standard notions of basis-dependent coherence, using only a small sample of chosen quantum states inside the MZI. 

{ Experimentally, probing coherence using violation of the inequalities reviewed in Section~\ref{subsec: event graphs} with a parallel setting amounts to performing various prepare-and-measure experiments, each estimating a given overlap $r_e$. In such a way, we can witness coherence solely with the overlap statistics, without assuming a fixed sequence of transformations as in the sequential setting, to be discussed in the next section. Moreover, depending on the inequality considered, one can significantly simplify the procedure for witnessing coherence, by using some finite number of phase-shifter settings, unlike methods based on the visibility of an interference fringe, requiring a continuous set of measurements. Finally, this protocol can be extended in a rather straightforward manner to probe coherence in multimode interferometric devices, hence beyond the two-mode MZI interference set-up we focus on here. 
}

Figure~\ref{fig:violation}-(a) shows how witnessing coherence can be done with asymmetric BSs, using the $C_3$ inequalities of Eqs.~\eqref{eq: c3 ineq 1}-\eqref{eq: c3 ineq 3}. One can also probe coherence using inequality \eqref{eq: inequality (d)}, by setting a single symmetric BS, and choosing a  set of $5$ phases $\phi$ { for preparations, and again 5 for measurement states}, or equivalently $10$ effective phase-differences, { each for the estimation of a single overlap $r_{ij}$ from Eq.~\eqref{eq: inequality (d)},} that allow for qubit violations up to $\frac{5\sqrt{5}}{4}\sim 0.79$~\cite{wagner2022inequalities}. Figure~\ref{fig:violation}-(c) reports five different choices of phases $\{\phi_i\}_{i=1}^5$ to witness set coherence for the states $\{\vert \psi(\pi/4,\phi_i)\rangle \}_{i=1}^5$ in the form of Eq.~\eqref{eq: state inside the interferometer}, showing that in this case we can find high violations with only five well-chosen PS settings.
\begin{figure*}[htb]
    \includegraphics[width=1\textwidth]{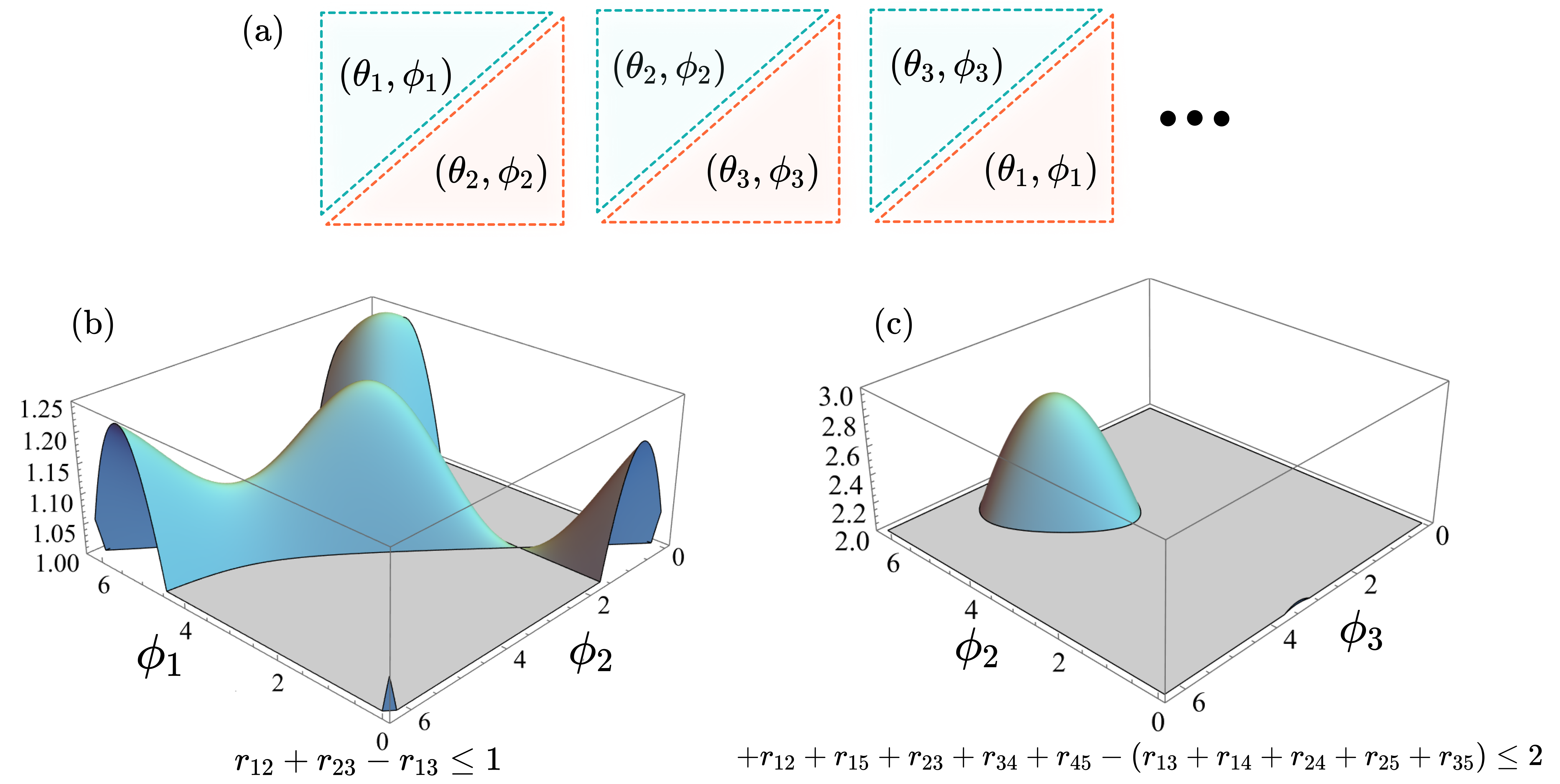}
    \caption{\textbf{Witnesses of basis-independent coherence and preparation contextuality in a standard Mach-Zehnder Interferometer (MZI).} Plot of phases $\phi_i$ and beam splitter's transition/reflection rates $\theta_i$ for which there are violations. A parallel setting is shown in (a),  where several runs of a given MZI allow to estimate the overlaps of the chosen states. The states inside the MZI are taken to be in their general form as $\vert \psi(\theta_i,\phi_i) \rangle = e^{i\phi_i}\cos(\theta_i)\vert 0 \rangle + i \sin(\theta_i)\vert 1 \rangle$. In plot (b), three different MZI configurations are considered, with both symmetric and asymmetric BSs, and a fixed PS in the third state; the chosen states are $\{\vert \psi(\theta_1 = \pi/4,\phi_1)\rangle , \vert \psi(\theta_2 = \pi/3,\phi_2)\rangle , \vert \psi(\theta_3 = 4\pi/4,\phi_3 = 6\pi/3) \rangle \}$, and the plot shows many possible violations of the $C_3$ inequality. In plot (c), the BSs are all taken to be symmetric, i.e., $\theta_i = \pi/4$ for all states $i=1,\dots,5$; in the plot, three out of five phase choices are fixed to be $\phi_1=0, \phi_4=4\pi/5, \phi = \pi/3$, letting the two remaining phases $\phi_2,\phi_3$ to vary in $[0; 2\pi]$, in order to look for high violations. Pure state violations in (b) can be mapped to contextual behaviors in prepare-and-measure scenarios, and general states violating (b) or (c) are basis-independent coherence witnesses.}  
    \label{fig:violation}
\end{figure*}

\subsubsection{Comparison with other methods for quantifying coherence}
The parallel setting scheme performs better over the use of visibility as a measure of nonclassicality for the states inside the interferometer. Estimating the visibility requires a maximization process over all possible values of $\phi$, whereas our scheme only needs to probe finitely many chosen values of $\phi$. Moreover, the visibility has no direct link with stronger forms of nonclassicality as our witnesses do have~\footnote{The authors of Ref.~\cite{catani2021interference} are working in solving this precise issue by providing clear cases for when the visibility cannot be explained with generalized noncontextual models.}, and it is not clear if it could potentially be used to witness coherence in a basis-independent form. Not only that, the visibility has been criticized as a measure of nonclassical behaviour inside the interferometer~\cite{chrysosthemos_quantum_2022}. Recent literature points towards the realization that resource theoretic coherence monotones quantify more consistently the non-classicality of quantum states inside interferometers.

We may hence compare our approach with known results in the basis-dependent coherence framework~\cite{baumgratz2014quantifying,wu2021experimental}, beyond the connection with contextuality. Coherence \textit{monotones}~\cite{Streltsov17,wu2021experimental} have been proposed as a way to quantify coherence. Monotones have been widely studied in the abstract formalism, and are elements in complementarity relations~\cite{Streltsov17,wu2021experimental}. Experimentally probing the majority of interesting coherence monotones - as they were originally proposed - requires full quantum state tomography. Although tomography is a fairly standard experimental construction that is constantly being improved~\cite{rambach2021robust}, the number of measurements and samples necessary for this task grows exponentially with the number of qubits in the best possible case, with or without the help of full quantum processing power~\cite{chen2022tight}. 

A way to avoid state tomography for quantifying coherence is via experimentally probing \textit{coherence witnesses}~\cite{zhang2018estimating,wu2021experimental,napoli2016robustness,wang2017directly}, which greatly improve experimental investigations of coherence monotones. Nevertheless, there is the draw-back that, even if finding the specific witness that provides the value of the monotone could be simple by using semi-definite programming, the actual implementation of this observable may be difficult as the dimension of the system increases~\cite{zheng2018experimental}. 

Notably, as it is clear from Sec.~\ref{subsec: event graphs}, this framework is capable of witnessing coherence independently of the way one encodes quantum information in the degrees of freedom of light, i.e., independently of Hilbert space dimension. Using the degree of violation of the quantum states as a generic measure of coherence of the set of states, as in Ref.~\cite{giordani2020experimental}, the lack of need for state tomography, or for maximization procedures, indicates that scalability of our approach is itself a promising tool for investigating multimode quantum interference. { The parallel setting just described can be used to certify coherence in complex interferometric settings beyond the MZI, of the kind that have been shown to be useful for quantum computation with linear optics~\cite{taballione2021universal}.}

\subsection{Sequential setting}\label{subsec: sequential setting}

In the sequential setting, see Fig.~\ref{fig: parallel vs sequential}-(b), we study the ability of a single choice of MZI parameters, i.e. fixed values of $\theta_1$ and $\phi_1$, of generating statistics that cannot be properly explained by a noncontextual model. For such, we consider three states $\vert \psi(0,0) \rangle$, $\vert \psi(\theta_1,0)\rangle$, $\vert \psi(\theta_1,\phi_1)\rangle$. Operationally, they correspond to the preparation of the input state $\vert 0 \rangle$, followed by the unitary describing the beam-splitter, $U_{\theta_1}$, and then by the one describing the phase-shifter, $U_{\phi_1}$. We show in Fig.~\ref{fig:BS fixed violations}-(b) that there are choices of $(\theta_1,\phi_1)$ implying violations of the $C_3$ inequalities { in such a sequential description}, which as discussed in Ref.~\cite{wagner2022inequalities} represent witnesses of contextuality.

{The main relevance of the sequential setting we discuss in this section might be its connection with contextuality, to be formally described in Sec.~\ref{subsubsec: gen contextuality in sequential}, and which will be instrumental in Sec.~\ref{subsubsec: contextual advantage} to prove a contextual advantage for the task of quantum interrogation. Before delving into the connection between event graph inequalities and contextuality inequalities we must first clarify some subtleties.}

{We have seen in Section~\ref{subsec: generalized contextuality} that one can discuss noncontextuality defined for prepare-and-measure scenarios. As discussed there, the set of noncontextual behaviors $\mathcal{NC}$ is completely characterized by inequalities defined for behaviours. Similarly, we have introduced in Section~\ref{subsec: event graphs} an inequality framework bounding a different set, denoted $C_G$, constituting two-state overlap inequalities. A priori, there is no clear connection between the two frameworks. 
For example, it is clear that behaviours $\mathbf{p}$ are more general than two-state overlaps, because they allow for generic statistics of the form $p(k|M,P) = \text{Tr}(\rho E_k)$ while for overlaps we require that $E_k = \sigma $ (some quantum state). }

{In the rest of this section we show that each $3$-cycle overlap inequality discussed in Section~\ref{subsec: event graphs} can be interpreted as a noncontextuality inequality from a prepare-and-measure scenario from the framework introduced in Section~\ref{subsec: generalized contextuality}. For each $3$-cycle overlap inequality, we will explicitly construct a prepare-and-measure scenario, with the structure reviewed in Sec.~\ref{subsec: generalized contextuality}, such that the chosen overlap inequality is now a valid noncontextuality inequality, for the constructed scenario. A similar approach was taken in Ref.~\cite{wagner2022inequalities}, but assuming ideal preparations and measurements. We generalize this connection, and present the  novel robust noncontextuality inequalities described in Eq.~\eqref{eq: robust c3 inequality}. Importantly, we then show that MZIs satisfy the exact operational constraints of the scenario constructed, implying that it is possible to test the noncontextuality inequalities found using such a simple experimental setting. Later, we use this inequality to quantify contextual advantage for quantum interrogation in Sec.~\ref{subsubsec: contextual advantage}. }

Differently from the case of basis-independent coherence, the connection with contextuality must be made rigorous by connecting these states and the experimental set-up of an MZI with some prepare-and-measure contextuality scenario~\footnote{Note that the sequential setting is an instance of  the parallel setting, since we can generate the triples of states just discussed, as well as all their overlaps, using parallel setting only. We hereby focus on the interpretation that can be drawn by possible noncontextual explanations of a single MZI interferometer.}, as we show now. 

\subsubsection{Generalized contextuality in the sequential setting}\label{subsubsec: gen contextuality in sequential}

It is simple to translate the statistics arising from event graphs as a subset of the statistics arising from prepare-and-measure operational scenarios, in a way that event graph cycle inequalities become noncontextuality inequalities for such scenarios. This operational translation arises from structural results that noncontextual quantum two-state overlaps must obey from Refs.~\cite{Lostaglio2020contextualadvantage,schmid2018discrimination}, together with self-duality property of quantum theory~\cite{janotta2013generalized,muller2012structure}, for which any state can also be viewed as a measurement effect, and from the fact that the operational statistics of those scenarios can be related to the edge-assignments $r$ of event graphs $G$. 

Let us now describe the relation between the $C_3$ event graph, and a prepare-and-measure scenario for the specific case of the MZI experiment we are interested in, as depicted in Fig~\ref{fig:From Event Graph to Operational Scenarios}. We discuss the generality of this approach in Appendix B. 


Inside the MZI, two transformations $T_{\theta_1}$ and $T_{\phi_1}$ applied in sequence follow the preparation $P_0$ of a given state. The preparation $P_0$ corresponds to an element of the class of equivalent procedures represented by a state $\vert 0 \rangle \langle 0 \vert$ in a prepare-and-measure setting, see Fig.~\ref{fig: parallel vs sequential}-(b). Each \textit{fixed} pair of transformations $T_{\theta_1}$ and $T_{\phi_1}$ operationally defines new preparation procedures $T_{\theta_1}(P_0)$ and $T_{\phi_1}(P_0)$. We let $P_1 = T_{\theta_1}(P_0)$ and $P_2 = T_{\phi_1}(P_1)$, corresponding to only one of the possible preparations that can be performed in the chosen scenario. Operationally we also assume that there exist preparations $P_{0^\perp},P_{1^\perp},P_{2^\perp}$ such that $\frac{1}{2}P_0+\frac{1}{2}P_{0^\perp}\simeq \frac{1}{2}P_1+\frac{1}{2}P_{1^\perp} \simeq \frac{1}{2}P_{2}+\frac{1}{2}P_{2^\perp}$, and we let the set $\mathcal{OE}_P^{MZI}$ have this operational equivalence. {As will be clear later when we discuss the quantum implementation with the MZI, for any $i$, $P_{i^\perp}$ corresponds to preparing the state orthogonal to the one associated with preparation $P_i$.} It is well known that behaviours $\mathbf{p}$ in scenarios satisfying similar operational equivalences are related to behaviours in Bell scenarios~\cite{chaturvedi2021characterising}. However, we make no assumptions related to probe non-locality for the inequalities we consider in order to claim they are Bell inequalities, since we are interested in probing generalized contextuality instead of nonclassicality of common causes, i.e., Bell nonlocality.

The final ingredient of the prepare-and-measure scenario must be the measurement effects. We can set binary-outcome measurements $M_1, M_2, M_3$ with the constraints $p(0\vert M_i,P_i)\geq 1-\varepsilon_i$, and $p(0 \vert M_i, P_{i^\perp})\leq \varepsilon_i$ $\forall i \in \{1,2,3\}$. In the ideal case we have $\varepsilon_i=0$, and the inequalities become equalities. Under such assumptions, each $p(0\vert M_j,P_i)\equiv p(M_j\vert P_i)$ describes an operational \textit{confusability}~\cite{schmid2018discrimination}, because it allows for the interpretation as the probability that process $P_j$ passes the test of being process $P_i$, under the discrimination measurement $M_i$. We can then write the scenario $\mathbb{B} = (6,3,2,\mathcal{OE}_P^{MZI},\emptyset)$. We will refer to these operational scenarios as the Lostaglio-Senno-Schmid-Spekkens scenarios (LSSS)~\cite{Lostaglio2020contextualadvantage,schmid2018discrimination}, as well as to the general constructions made in Appendix B. Figure~\ref{fig:From Event Graph to Operational Scenarios} depicts the connection between $C_3$ and $\mathbb{B}$. Edges correspond to operational probabilities $p(0 \vert M_i, P_j)$. Green edges have the constraints  $p(0\vert M_i ,P_i)=1$, and edges with $p(0\vert M_i ,P_{i^\perp})=0$ are not represented. We thereafter note that we can interpret the confusabilities $r_{ij} = p(0\vert M_i, P_j)$ from the imposed constraints. Structurally, any ontological model for the behaviours of scenario $\mathbb{B}$ with the constrains described satisfies the following inequality~\cite{Lostaglio2020contextualadvantage}
\begin{equation}\label{eq: LSSS structure}
    \vert \Vert \mu_i - \mu_{i+1}\Vert - 2(1-r_{i i+1}) \vert \leq 2\varepsilon_i ,
\end{equation}
where summation in the labels is mod 3, the epistemic states $\{\mu_i\}_{i=1}^3$ are the ontological model representation for the preparations $\{P_i\}_{i=1}^3$, and the norm corresponds to the $l_1$-norm defined over the space $\Lambda$, given by $\Vert \mu_i - \mu_j \Vert := \int_\Lambda \vert \mu_i(\lambda) - \mu_j(\lambda)\vert d\lambda$. 

{In Figure~\ref{fig:From Event Graph to Operational Scenarios} we present a schematic translation between the event graph $C_3$ and the prepare-and-measure scenario we investigate. Note that, as is done throughout our work, we label the nodes of event graphs as $1,2,\dots,n$. In the figure, we initially show $C_3$ as before in Fig.~\ref{fig: simplest event graph}. Each node $i$ in the event graph is then mapped to a preparation node $P_{i-1}$, a measurement effect node $0|M_{i-1}$, and another preparation node $P_{i^\perp}$. All these operational constituents, together with the operational equivalence shown, form the scenario $(6,3,2,\mathcal{OE}_P^{MZI},\emptyset)$ mentioned before. For the interferometric description that follows, it is interesting to make this slight change in labels $i \mapsto i-1$ from event graph to the prepare-and-measure primitives. In this way, for Fig.~\ref{fig:From Event Graph to Operational Scenarios}, we have $$r_{ij} = p(0|M_{i-1},P_{j-1}) = p(0|M_{j-1}, P_{i-1}).$$ This is done so that we can naturally relate $P_0$ with preparation of $\vert 0\rangle \langle 0 \vert$. Preparation and measurement procedures alternate in an inner graph. We also include \emph{new} procedures $P_{0^\perp}, P_{1^\perp}, P_{2^\perp}$ corresponding to the included operational elements satisfying the operational equivalence in the figure, characteristic of the scenario~\footnote{Note that, from a purely graph theoretic perspective, the transformation between the two graphs is non-trivial. Yet, operationally, the rules for constructing the prepare-and-measure scenario described here, and also in Appendix~\ref{appendix: general}, are straightforward.}. From this notation, we naturally associate $P_{0^\perp}$ with the preparation of $\vert 1\rangle \langle 1 \vert$. }

As a result from the LSSS structural results of Eq.~\eqref{eq: LSSS structure}, we find that the {3-cycle inequalities, facet-defining for the convex polytope  $C_{C_3}$ associated with event graph $C_3$ (c.f. Section~\ref{subsec: event graphs}), are} actually related to the robust noncontextuality inequality for the LSSS scenario just described and depicted in Fig.~\ref{fig:From Event Graph to Operational Scenarios},
\begin{equation}\label{eq: robust c3 inequality}
    r_{12}+r_{13}-r_{23} \leq 1 + \varepsilon_1+\varepsilon_2+\varepsilon_3.
\end{equation}
which can be shown by a simple application of the triangle inequality. A similar result holds for any $n$-cycle overlap inequality, which allows us to prescribe a new family of robust noncontextuality inequalities. We also remark that a similar construction using the $C_3$ noncontextuality inequality structure has appeared in Ref.~\cite{flatt2021contextual}.

\begin{figure}[tb]
    \centering
    \includegraphics[width=\columnwidth]{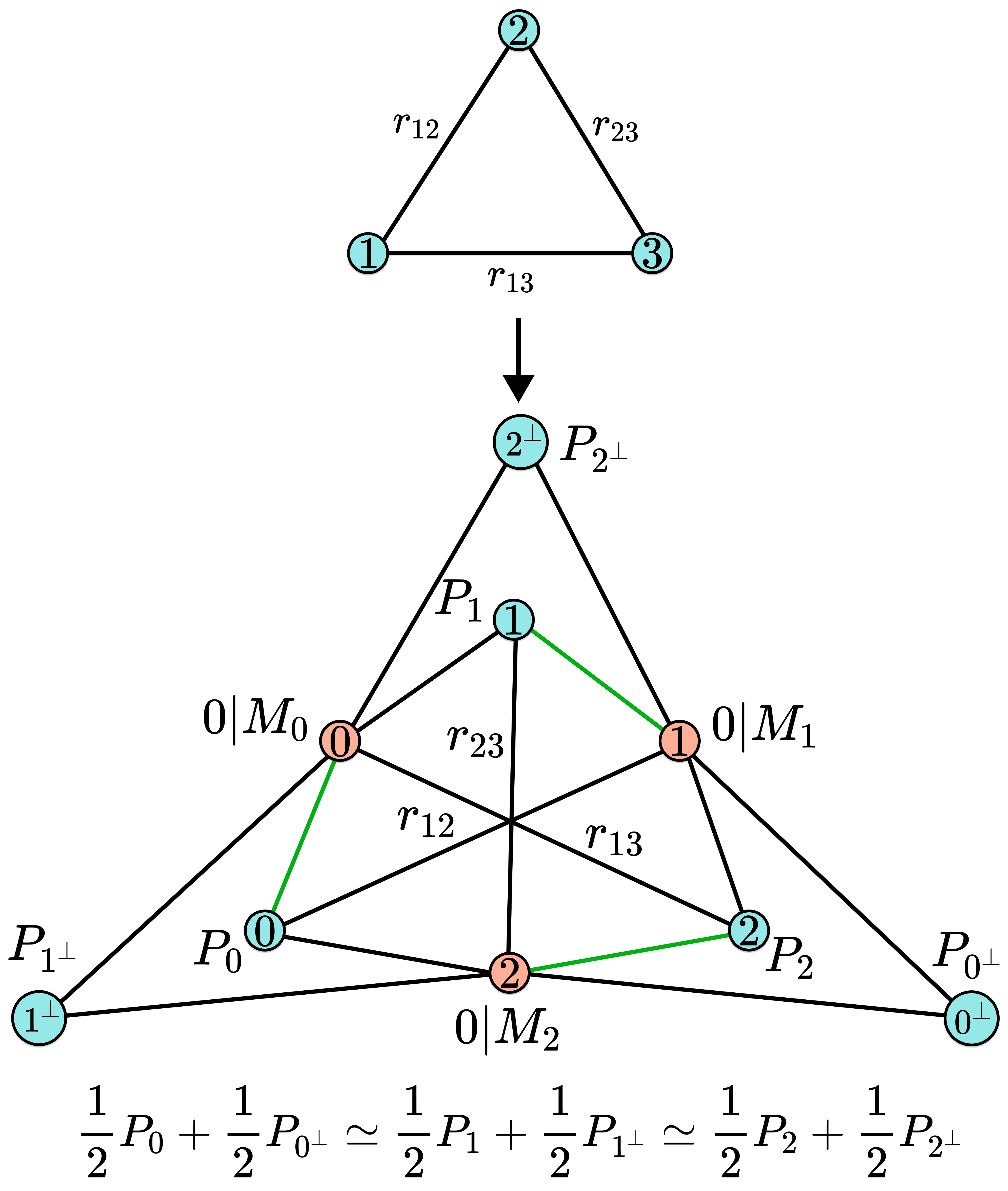}
    \caption{\textbf{Contruction of the prepare-and-measure scenario from the $C_3$ graph.} The event graph with edges $r_{12},r_{13},r_{23}$ is mapped towards a prepare-and-measure scenario where each vertex is associated to two new preparation procedures and one measurement effect. Edge-assignments in the latter are mapped to symmetric confusabilities in the former.}
    \label{fig:From Event Graph to Operational Scenarios}
\end{figure}

Although the noncontextual confusabilities and the coherence-free two-state overlaps differ in interpretation, their description collapses in the deterministic assignments defining the classical $C_{C_3}$ polytope, implying that the two seemingly different interpretations can be linked by the same inequalities. We map event graph inequalities into noncontextuality inequalities by mapping edge-assignments $r_{ij}$ into confusabilities satisfying $r_{ij}=r_{ji}$.

It remains to show that the statistics arising from the quantum theory of the MZI \textit{respect} the operational equivalences and symmetries of the LSSS scenario $\mathbb{B}$. When we treat quantum theory as an operational theory, the confusabilities $p(M_i \vert P_j)$ match the two-state overlaps related to processes $P_i$ in the equivalence class defined by $\vert \phi_i \rangle$. In this sense, we have that $r_{ij} = p(M_i \vert P_j) = \vert \langle \psi_i \vert \psi_j \rangle \vert^2 = p(M_j \vert P_i) = r_{ji}$, in the ideal case where we consider pure quantum states as vertices of the $C_3$ graph. This connection also clarifies that the violation of generalized inequalities needs basis-independent coherence of the chosen states, that \emph{per se} may be seen as an application of basis-independent coherence in the context of strong witnesses of contextuality. In the case of the MZI in Fig.~\ref{fig:MZI}, the preparation procedures $P_0$ and $P_1$ correspond to {preparing} a photon in modes \textit{a} or \textit{b}, states $\vert 0 \rangle \langle 0 \vert$ and $\vert 1 \rangle \langle 1 \vert$, respectively. The transformations are described by unitaries $U_{BS_1}(\theta_1)$ and $U_{\phi_1}$, which imply that
\begin{align*}
    \mathbb{1}/2 &= \frac{1}{2}\vert 0 \rangle \langle 0 \vert + \frac{1}{2}\vert 1\rangle \langle 1 \vert =\\ &=\frac{1}{2}U_{BS_1}(\theta_1)\vert 0 \rangle \langle 0 \vert U_{BS_1}(\theta_1)^\dagger+\\ &\hspace{4.5em}+\frac{1}{2}U_{BS_1}(\theta_1)\vert 1\rangle \langle 1 \vert U_{BS_1}(\theta_1)^\dagger =\\ &=\frac{1}{2}U_{\phi_1}U_{BS_1}(\theta_1)\vert 0 \rangle \langle 0 \vert U_{BS_1}(\theta_1)^\dagger U_{\phi_1}^\dagger +\\
    &\hspace{4.5em}+\frac{1}{2}U_{\phi_1}U_{BS_1}(\theta_1)\vert 1\rangle \langle 1 \vert U_{BS_1}(\theta_1)^\dagger U_{\phi_1}^\dagger.
\end{align*}
Hence, the preparation of the states $\vert \psi(0,0) \rangle, \vert \psi(\theta_1,0) \rangle$, $\vert \psi(\theta_1,\phi_1) \rangle$ and of their complements satisfies the operational equivalences we have described before. Moreover, the MZI's measurement stage can universally implement the binary outcome measurements $M_1 = \{\vert \psi(0,0) \rangle \langle \psi(0,0) \vert, \mathbb{1} -\vert \psi(0,0) \rangle \langle \psi(0,0) \vert \}$, as well as for the remaining states. Defining the quantum effects {$[0 \vert M]$} to be the same as the quantum states being prepared, we have ideally the signatures $p(M_1\vert P_1) = 1$ and $p(M_1 \vert P_{1^\perp}) = 0$. 
With the tools provided so far, let us now describe \textit{when} one can witness generalized contextuality in a single MZI, using the simplest $3$-cycle inequality. {Considering the states $\vert \psi_1 \rangle = \vert \psi(0,0)\rangle, \vert \psi_2 \rangle = \vert \psi(\theta_1,0)\rangle$ and $\vert \psi_3 \rangle = \vert \psi(\theta_1,\phi_1)\rangle$,} we can calculate { Eq.~\eqref{eq: c3 ineq 1} 
\begin{equation}\label{eq: c3 functional for standard MZI}
    h(\theta_1,\phi_1) := \vert \langle \psi_1 \vert \psi_2 \rangle \vert^2 + \vert \langle \psi_1 \vert \psi_3 \rangle\vert^2 - \vert \langle \psi_2 \vert \psi_3 \rangle\vert^2 \leq 1,
\end{equation}}
which will characterize the violations with respect to this encoding. Note that this is the only functional $h$, of all $3$ non-trivial~\footnote{Note that these functionals define the facet inequalities from the polytope $C_{C_3}$, which also has \textit{trivial} inequalities of the form $0 \leq r_e \leq 1$, for all edges $e \in C_3$. } ones { presented in Eqs.~\eqref{eq: c3 ineq 1}-\eqref{eq: c3 ineq 3} from the convex polytope defined with respect to } the $C_3$ graph, that can have a quantum violation, due to the sequential setting encoding of states we choose. Figure~\ref{fig: violations for any theta and phi}-(b) shows all the possible choices of BSs and PSs to violate the contextuality inequalities of the $C_3$ event graph given the constraint imposed by the choice of triple of states just described. 
\begin{figure}[tb]
    \centering
    \includegraphics[width=\columnwidth]{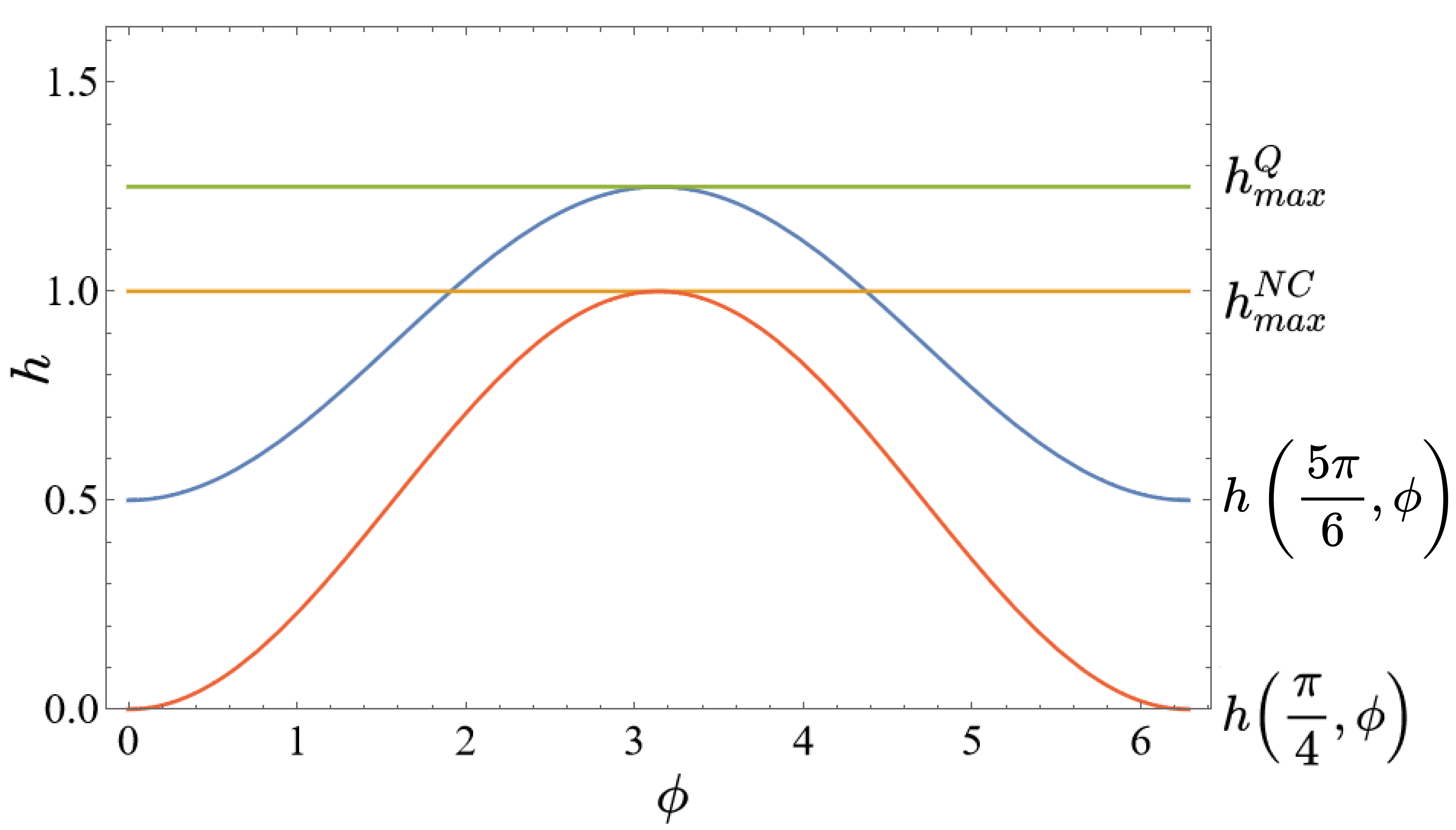}
    \caption{\textbf{Violations for fixed beam-splitter (BS) ratios.} The lower curve shows the case where the first BS in the interferometer is symmetric, i.e. $\theta=\pi/4$, and it is clear that it is never possible to witness contextuality with the inequality \eqref{eq: c3 ineq 1}. On the opposite, the case with $\theta=\pi/3$ allows the best violation for variations in the phase parameter $\phi$. In the picture, due to a choice of encoding in the description of the first BS, the phase for this latter case is taken to be {$\theta=5\pi/6=\pi/2+\pi/3$}.}
    \label{fig:BS fixed violations}
\end{figure}
When the first BS is symmetric, corresponding to $\theta=\pi/4$, we find \textit{no} violation of the noncontextuality inequality $h(\theta_1,\phi_1) \stackrel{NC}{\leq} 1$, see also Fig.~\ref{fig:BS fixed violations}. This is consistent with Ref.~\cite{catani2021interference}, that explicitly provides a noncontextual model for the cases with $\theta = \pi/4$ and $\phi \in \{0,\pi\}$, from modifications of the famous Spekkens Toy Model of Ref.~\cite{spekkens2007evidence}. As shown in Fig.~\ref{fig:BS fixed violations}, if we relax the constraint of having the first BS to be symmetric, for $\theta=\pi/3$ we find maximal violation for quantum states over a great circle in the Bloch sphere, consistently with the results in Refs.~\cite{galvaobroad2020quantumandclassical,giordani2021witnessesofcoherence}. As a remark, we notice that not all choices made for BSs and PSs violate the $C_3$ inequalities we have chosen, but the related states could still be used to violate \textit{some other} noncontextuality inequality different than the one we are focused {on.} 
\begin{figure*}[ht]
    \includegraphics[width=1\textwidth]{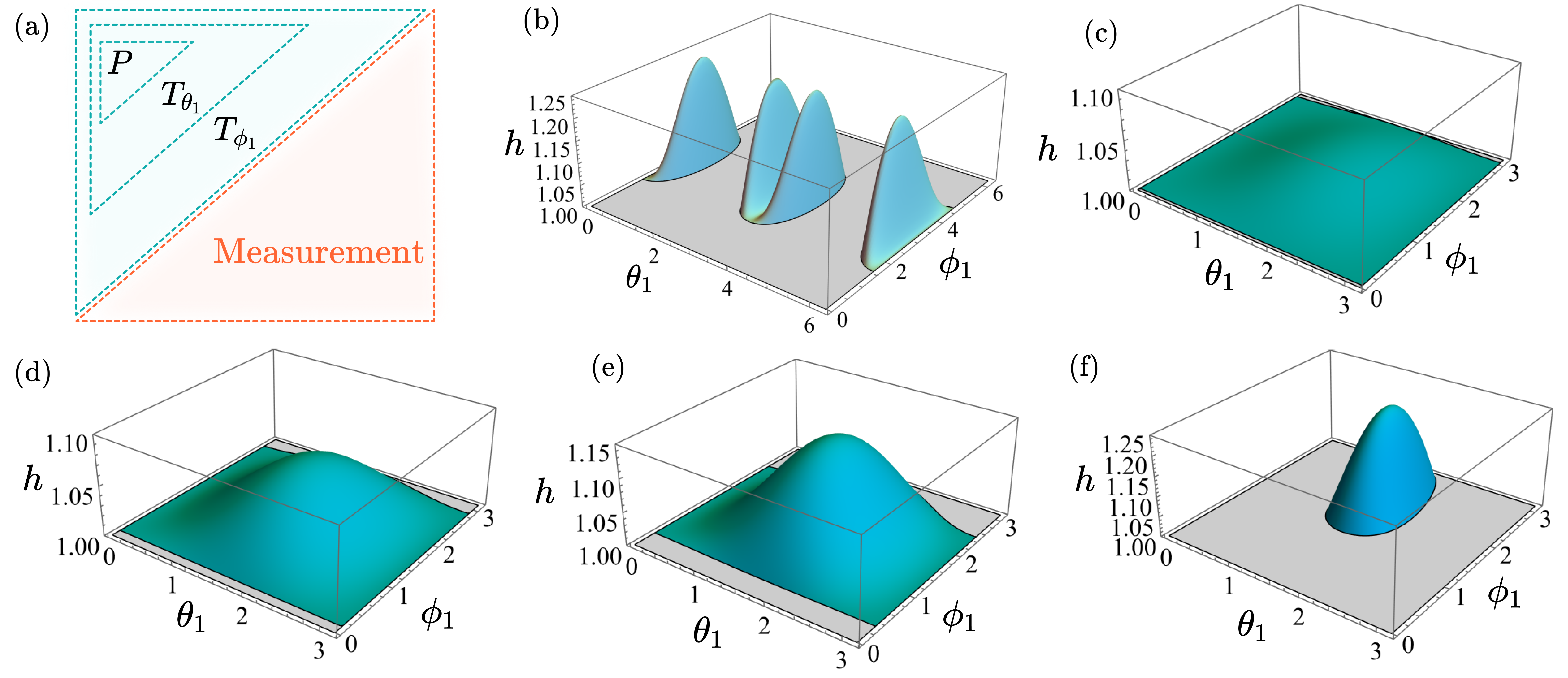}
    \caption{\textbf{MZI violations for the states related to the sequential setting.} Plots of the functional $h$ from Eq.~\eqref{eq: c3 functional for standard MZI} for different triples of states, analysing the dynamics of a single MZI in the sequential setting (a). Plot (b) shows violations for the states    $\{\vert 0 \rangle, \cos(\theta_1)\vert 0\rangle + i\sin(\theta_1)\vert 1\rangle, e^{i\phi_1}\cos(\theta_1)\vert 0\rangle + i\sin(\theta_1)\vert 1\rangle\}$, and plots (c)-(f) show violations for the states $\{\frac{1}{\sqrt{2}}(\vert 0\rangle +ie^{i\phi}\vert 1\rangle), U_{\theta_1}\frac{1}{\sqrt{2}}(\vert 0\rangle +ie^{i\phi}\vert 1\rangle), P_{\phi_1}U_{\theta_1}\frac{1}{\sqrt{2}}(\vert 0\rangle +ie^{i\phi}\vert 1\rangle)\}$, being $P_{\phi_1}$ the phase-gate, and $U_{\theta_i}$ the beam-splitter. While (b) is given by a fixed input state $\vert 0 \rangle$, (c)-(f) have different superpositions of $\vert 0 \rangle$ and $\vert 1 \rangle$ as input. In this latter case, some input states provide violation for virtually \textit{any} MZI configuration. In figures (c)-(f), the phase varies as $\phi = \pi/2 + \pi/k$ for $k=150,50,20,5$, respectively. For values of $\phi$ arbitrarily close but different from $\pi/2$, as in plots (c)-(e), practically any pair $(\theta,\phi) \in [0,\pi]^2$ provides a violation. { Concretely, in (c), every pair $(\theta_1,\phi_1) \in [0,\pi]\times [0.042,\pi-0.084]$ provides a violation.} Experimental precision in evaluating overlaps via parallel SWAP-test using photons currently achieve an order of $\approx 0.01$ and confidence of more than $5$ standard deviations~\cite{giordani2021witnessesofcoherence}. }
    \label{fig: violations for any theta and phi}
\end{figure*}

From the results presented so far it is clear that, in order to access violations of the simple $C_3$ inequalities of a fixed MZI setting, we need the input photons to pass through an \textit{asymmetric} BS. The maximal value $h_{max}^Q$ of a violation is given by $1/4$, and can be experimentally investigated in a robust way. This maximal violation was analytically obtained in Ref.~\cite{galvaobroad2020quantumandclassical}. These results allow to interpret the pairs $(\theta,\phi)$ not only as coherence generators~\cite{ares2022beam}, but also as preparation contextuality generators, again quantified by the degree of violation. From the robust inequality in Eq.~\eqref{eq: robust c3 inequality}, we find a threshold for the allowed noise parameters $\{\varepsilon_i\}_{i=1}^3$: if their sum surpasses $1/4$, no quantum violation can be found.

Up to now, we studied the case for which the input of the MZI in Fig.~\ref{fig:MZI} consisted of the state $\ket{0}\bra{0}$ (or equivalently $\ket{1}\bra{1}$). We can relax this constraint by using \textit{generic} input states, and understand what the benefits may be of doing so. One possibility is to simply prepare general states using additional BS and PS before entering the MZI~\cite{chrysosthemos_quantum_2022}, to prepare general two-mode superpositions that are then fed into the input ports of the  MZI. Strikingly, it is in fact possible to find specific input states $\ket{\psi}$ that violate the $C_3$ inequalities for {\textit{virtually all states}} inside the  MZI. {More precisely, it is possible to find input states $\vert \psi \rangle$ so that, as we can see numerically, for any choice of BS $U_{\theta_1}$ and any choice of PS $U_{\phi_1}$ we have that,
\begin{equation*}
    \{\vert \psi \rangle, U_{\theta_1}\vert \psi \rangle, U_{\phi_1}U_{\theta_1}\vert \psi \rangle \}
\end{equation*}
violates the $3$-cycle inequality considered. In Appendix C, we calculate the functional $h(\theta_1,\phi_1)$ resulting from the above triplet for a general input state $\vert \psi \rangle = U_{\phi_0}U_{\theta_0}\vert 0\rangle$, considering some other pair MZI $(\theta_0,\phi_0)$ responsible for preparing this state.} Figure~\ref{fig: violations for any theta and phi} (c)-(f) shows plots corresponding to the violations of inequality \eqref{eq: c3 ineq 1} obtainable by different input states in the MZI. { In Fig.~\ref{fig: violations for any theta and phi} (c) we present the situation in which for \textit{all} values in the range $0 \leq \theta_1 \leq \pi$ and $0.042 \leq \phi_1 \leq \pi-0.084$, all choices of $U_{\theta_1}$ and $U_{\phi_1}$ incur in a violation of the inequality. }It is clear that there is an interesting trade-off between broad and high violations for states found. 

It is important to point that the investigation of contextuality in interferometers has been experimentally addressed before with the Kochen-Specker approach to contextuality~\cite{borges2014quantum,liu2009experimental}. See Ref.~\cite{Budroni21} for a recent comprehensive review with interferometric investigations of quantum contextuality; none of those reported in the review have used single photons in a simple MZI. The difficulty in probing this notion of contextuality on a single MZI stems from the fact that it is not applicable to two-dimensional systems, as already mentioned. 

\subsection{Advantages for information tasks over coherence-free and noncontextual models}\label{subsec: Generic advantages}

The simplicity of the argument for finding contextual violations using $C_3$ cycle graphs is built over strong structural results. However, the generality of the application, and the usefulness of such results may have a broader impact for generic proofs of contextual advantage provided by quantum theory, as it was already anticipated by Ref.~\cite{Lostaglio2020contextualadvantage}. The common nontrivial aspects of  proving contextual advantage of a given information task relies on finding the correct operational prepare-and-measure scenario for it, followed by later finding (possibly robust) noncontextuality inequalities, and relating those with a figure of merit that quantifies the success of the particular task~\cite{Lostaglio2020contextualadvantage,schmid2018discrimination,flatt2021contextual,carceller2022quantum,mukherjee2021discriminating}. This generic strategy is greatly simplified by the advent of a clear connection among contextuality scenarios, event graphs, and any figure of merit of quantum information tasks that can be described in terms of two-state overlaps. This precise connection allows to simplify this kind of result using the following approach:
\begin{enumerate}
    \item Given a particular quantum information task of interest, describe the figure of merit for this task in terms of two-state overlaps.
    \item Use the event graph inequalities to obtain bounds for the {  the specific figure of merit at study}. 
    \item Compare the optimal results obtainable using quantum theory with the inequality bounds. If there are quantum violations, there will be an advantage for the task, provided by either coherence or contextuality. 
\end{enumerate}
This strategy for contextual advantage was first envisioned in Ref.~\cite{Lostaglio2020contextualadvantage}, yet we extend it to the case of advantages provided by basis-independent coherence. As an example of this algorithmic approach, we apply this methodology for the MZI prepare-and-measure scenario to the task of quantum interrogation. We will see that, although   interrogation can be reproduced by a noncontextual model for some cases, as was shown in Ref.~\cite{catani2021interference}, there exists a quantifiable gap between efficiency achievable with such models and with quantum theory.

\subsubsection{Contextual advantage for quantum interrogation}\label{subsubsec: contextual advantage}

Consider the MZI from Fig.~\ref{fig:MZI}, the $?$-box being the photosensitive bomb associated with the quantum interrogation protocol \cite{elitzur1993quantum}. In this scenario, we set $U_{\theta_1} = U_{\theta_2}^\dagger$, with $U_{\theta_1}$ given by the general description of a BS in Eq.~\eqref{eq: BS}, in order to have a dark detector.

Let us now address operationally the content of the MZI statistics arising from quantum interrogation. The  efficiency of the task arises from a specific interplay between the creation of superposition in the preparation stage, and the ability to make a coherent detection in the measurement stage~\cite{masini2021coherence}. By using the formal operational terminology, we write $P_1$ to describe the preparation of a photon in mode \textit{a} which, in quantum theory, corresponds to the preparation of the state  $\vert 0 \rangle$. Similarly, with $P_2, P_3$ we refer to the preparations of a photon again in mode \textit{a}, followed by the two beam-splitters $BS_1$ and $BS_2$ as in Fig.~\ref{fig:MZI}, that in quantum theory are modelled as $U_{\theta_1} \vert 0 \rangle$ and $U_{\theta_2}\vert 0 \rangle$. Each preparation has associated a measurement procedure in the prepare-and-measure scenario we construct, exactly as we have described before when looking at the LSSS constraints and contextuality in the sequential setting. To ease notation with respect to the operational content of the scenario, we write $r_{ij} = p(0 \vert M_i, P_j)$, see table~\ref{tab:operational_content_table}. From the assumption $U_{\theta_1} = U_{\theta_2}^\dagger$, we can simply write $\theta_1 = \theta$ and $\theta_2 = \theta^\dagger$.

\begin{table}
    \centering
    \begin{tabular}{|c|c|c|c|}
    \hline
         & $P_1$ & $P_2$ & $P_3$ \\
    \hline
    $0|M_1$  &  $1$  &   $r_{\theta 0}$  &  $r_{\theta^\dagger 0}$   \\
    \hline
    $0|M_2$  &  $r_{\theta 0}$   &  $1$ &   $r_{\theta \theta^\dagger}$  \\
    \hline
    $0|M_3$  &  $r_{\theta^\dagger 0}$   &  $r_{\theta \theta^\dagger}$   &  $1$   \\
    \hline
    \end{tabular}
    \caption{\textbf{Operational symmetries over the statistics of quantum interrogation.} Notation for the prepare-and-measure statistics arising from the quantum interrogation scenario.}
    \label{tab:operational_content_table}
\end{table}

The statistics arising after the first BS corresponds to the statistics described by $r_{\theta 0} = p(0 \vert M_1,P_2), r_{\theta 1}=p(1\vert M_1,P_2)$. In quantum theory, { detecting the photon either in one arm or the other have the  associated probabilities} $r_{\theta 0}= \vert \langle \psi(\theta,0) \vert 0 \rangle \vert^2, r_{\theta 1} = \vert \langle \psi (\theta,0) \vert 1 \rangle \vert^2$. With reference to Fig.~\ref{fig:MZI}, in quantum theory $P_2$ corresponds to the preparation $\vert \psi(\theta,0)\rangle = U_\theta \vert 0 \rangle$, 
\begin{equation}\label{eq: preparation state bomb}
    \vert \psi(\theta,0)\rangle = \cos(\theta)\vert 0 \rangle + i\sin(\theta)\vert 1 \rangle. 
\end{equation}
In case of the presence of a bomb inside the device, the dark detector lights up when the photon is sent into the arm for which no explosion happens. In this event, we find the measurement statistics for the detectors to be $r_{\theta^\dagger 0} = p(0\vert M_3, P_1),r_{\theta^\dagger 1} = p(1\vert M_3, P_1) $ whereas, in quantum theory, $r_{\theta^\dagger 0} = \vert \langle 1 \vert \psi^\dagger(\theta,0)\rangle \vert^2$, $r_{\theta^\dagger 1} = \vert \langle 0 \vert \psi^\dagger(\theta,0)\rangle \vert^2$, with $\vert \psi^\dagger(\theta,0) \rangle = U_\theta^\dagger \vert 0 \rangle $. We define $P_3$ operationally just as $P_2$, but with respect to a BS that has a difference in phase of $\pi$. In quantum theory, this is described by $U_\theta^\dagger$, which gives
\begin{equation}\label{eq: detection state}
    \vert \psi^\dagger(\theta,0) \rangle = \cos(\theta)\vert 0 \rangle - i\sin(\theta)\vert 1 \rangle.
\end{equation}

Every preparation has an associated binary-outcome measurement.  Hence, $M_2$ and $M_3$ measurement procedures give $p(0 \vert M_2, P_2) = 1$ and $p(0 \vert M_3, P_3) = 1$, and the quantity that in quantum theory is given by $r_{\theta \theta^\dagger} = \vert \langle \psi (\theta,0) \vert \psi^\dagger(\theta,0) \rangle \vert^2$ is operationally defined by the confusability $p(0 \vert M_2,P_3) = r_{\theta \theta^\dagger}$.

The efficiency of the task thus addressed can be expressed solely by the operational quantities just described. The quantity $p_{succ} = r_{\theta 0}r_{\theta^\dagger 1}$ corresponds to the probability that the photon enters the MZI, does not chose the path with the bomb, and then, after the second BS, it makes the dark detector light up, ending up with a successful detection of the bomb without exploding it. On the contrary, $p_{bomb} = r_{\theta 1}$ corresponds to the probability that, after the first BS, the photon takes the path with the bomb, which consequently explodes, failing to accomplish the task. Hence,
\begin{equation}\label{eq: operational efficiency}
    \eta = \frac{r_{\theta 0}r_{\theta^\dagger 1}}{r_{\theta 0}r_{\theta^\dagger 1}+r_{\theta 1}}.
\end{equation}

It is interesting to note that there are operationally relevant symmetries that are respected by quantum theory, and that can be imposed in the prepare-and-measure scenario. In particular, quantum theory satisfies $r_{\theta 0} = \vert \langle \psi(\theta,0) \vert 0\rangle \vert^2 = \vert \langle \psi^\dagger(\theta,0) \vert 0 \rangle \vert^2 = r_{\theta^\dagger 0}$ and analogously,  $r_{\theta 1} = r_{\theta^\dagger 1}$. Moreover, $r_{\theta \theta^\dagger} = \vert \langle \psi(\theta,0)\vert \psi^\dagger(\theta,0) \rangle  \vert^2 = \vert\cos^2(\theta)-\sin^2(\theta)\vert^2= (r_{\theta 0} - r_{\theta 1})^2$. Therefore, we assume the following symmetries over the scenario, 
\begin{align}
    &r_{\theta 0} = r_{\theta^\dagger 0}\label{eq: operational constrain 1}\\
    &r_{\theta 1} = r_{\theta^\dagger 1}\label{eq: operational constrain 2}\\
    &r_{\theta \theta^\dagger} = (r_{\theta 0} - r_{\theta 1})^2\label{eq: operational constrain 3}
\end{align}
Under symmetries~\eqref{eq: operational constrain 1} and \eqref{eq: operational constrain 2} we can re-write the efficiency $\eta$ as a function of $r_{\theta 0}$ only,
\begin{equation}\label{eq: operational efficiency with constraints}
    \eta = \frac{r_{\theta 0}}{r_{\theta 0}+1},
\end{equation}
and with the symmetry \eqref{eq: operational constrain 3} and the $C_3$ inequality, it is possible to find a clear gap bounding the efficiency achievable by any noncontextual model that attempts at explaining the quantum theory predictions,
\begin{align*}
    -r_{\theta \theta^\dagger}+r_{\theta 0} + r_{\theta^\dagger 0} \leq 1 \hspace{0.8em}&\stackrel{\eqref{eq: operational constrain 1}}{\Rightarrow}\hspace{0.8em}
    -r_{\theta \theta^\dagger} + 2r_{\theta 0} \leq 1 \hspace{0.8em}\Rightarrow \\[1.5ex]
    \frac{r_{\theta 0}}{r_{\theta 0} + 1} \leq \frac{1 + r_{\theta \theta^\dagger}}{2(r_{\theta 0}+1)} \hspace{0.8em}&\stackrel{\eqref{eq: operational constrain 3}}{\Rightarrow}\hspace{0.8em} \eta \stackrel{\small{NC}}{\leq} \frac{1 + (2r_{\theta 0}-1)^2}{2(r_{\theta 0}+1)}.
\end{align*}

Figure~\ref{fig: efficiency standard interrogation} shows a plot of the efficiency $\eta$, achievable by quantum theory  as described by Eq.~\eqref{eq: operational efficiency with constraints}, versus the optimal noncontextual efficiency $\eta^{NC}_{opt} = \frac{1 + (2r_{\theta 0}-1)^2}{2(r_{\theta 0}+1)}$, as functions of the operational quantity $r_{\theta 0}$. From this plot, it is clear that the efficiencies meet at $1/3$, corresponding to the efficiency of the original proposal~\cite{elitzur1993quantum}, as well as that of the noncontextual model of Ref.~\cite{catani2021interference}. Our results parallel those of Ref.~\cite{catani2021interference}, which presents a noncontextual model for the cases when $r_{\theta 0}=1/2$ and $r_{\theta 0}=1$.

\begin{figure}[tb]
    \centering
    \includegraphics[width=\columnwidth]{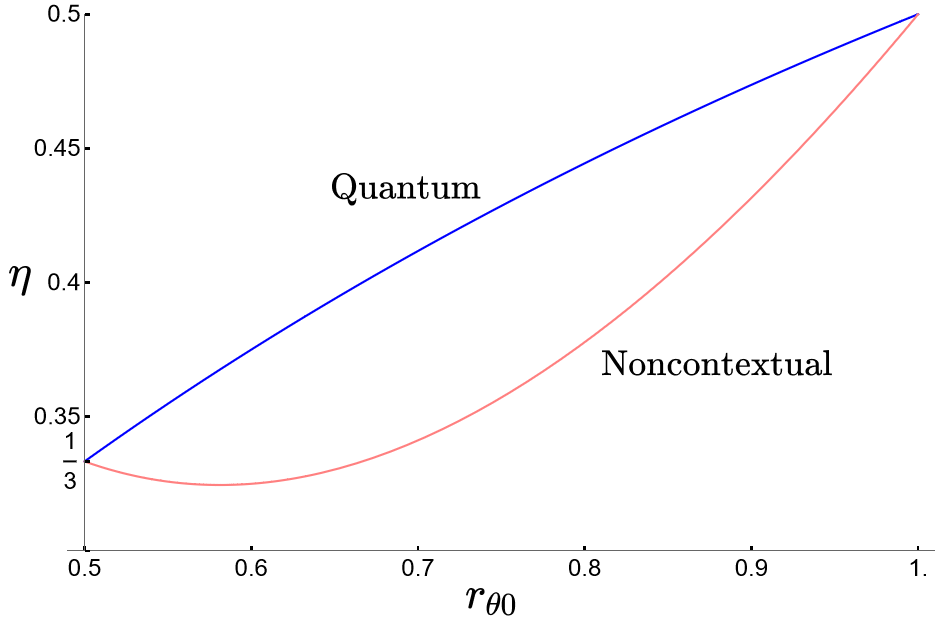}
    \caption{\textbf{Contextual advantage for quantum interrogation.} In blue, the upper curve presents the efficiency for an ideal action of asymmetric BSs in the MZI, achievable with standard quantum theory. In pink, the lower curve shows the optimal efficiency achievable by any noncontextual ontological model that explains the statistics of the associated prepare-and-measure scenario, while respecting two operational constraints also satisfied by quantum theory: $r_{\theta 0} = r_{\theta^\dagger 0}, r_{\theta \theta^\dagger} = (2r_{\theta 0}-1)^2$.}
    \label{fig: efficiency standard interrogation}
\end{figure}
We observe that asymmetric BSs provide contextual advantage for the task described with quantum theory, performing significantly better for some choices of $r_{\theta 0}$. The maximum gap is given by $\max_{r_{\theta 0}}(\eta - \eta_{opt}^{NC}) \approx 0.071$ at $r_{\theta 0} =\sqrt{3}-1 \approx 0.73$. It is also unclear if quantum theory, or some other toy noncontextual models, could respect the constraints imposed and follow the optimal noncontextual path depicted in Fig.~\ref{fig: efficiency standard interrogation}. Although the scenario just described achieves maximal efficiency for contextual advantage with $\eta_{max} = 1/2$, it is possible to develop robust schemes of quantum interrogation that are more efficient devices different from the MZI~\cite{kwiat1995interactionfree,kwiat1999high,rudolph2000better}. Those different scenarios allow, theoretically, for efficiencies as high as $\eta = 1$.

\section{Discussion and future work}

In this work we have provided experimentally robust tools to witness and quantify basis-independent coherence and contextuality inside Mach-Zehnder interferometers, from the toolbox of event graph scenarios and prepare-and-measure contextuality scenarios. We have also devised a general methodology for rigorously investigating quantum advantages related to interference, and showed a contextual advantage for the task of quantum interrogation as a case-study.

As a future step, it would be interesting to find complementarity relations between basis-independent coherence witnesses and path-information. To do so, one must develop a resource-theory for the event graph witnesses; a research path with independent merits. It is also relevant to study how the witnesses compare with standard quantification approaches in the cases for which the states $\rho$ have large dimensions. 

Finally, a relevant open problem is whether \textit{any} event graph inequality can be interpreted as a preparation noncontextuality inequality with respect to the construction of the scenarios we have considered here. 

\section{Acknowledgements}

We would like to thank Vinicius Pretti Rossi and Antonio Zelaquett Khoury for helpful comments on earlier versions of this work.

We acknowledge financial support from FCT -- Fundação para a Ciência e a Tecnologia (Portugal) via PhD Grant SFRH/BD/151199/2021 (RW), via PhD Grant SFRH/BD/151190/2021 (AC), and via CEECINST/00062/2018 (EFG). This work was supported by the ERC Advanced Grant QU-BOSS, GA no. 884676.

\bibliographystyle{quantum}

\onecolumn
\appendix

\section{Violations represent lower bounds on distance quantifiers}\label{appendix: rigorous violations analysis}

We may define the degree of basis-independent coherence witnessed by a given quantity $r$ in terms of the minimal distance with respect to the set $C_G$,
\begin{equation}
    \mathcal{D}(r) := \min_{r^* \in C_G} D(r,r^*)
\end{equation}
for $D(r,r^*)$ some abstract notion of distance. Even though there is no \textit{formal} resource theory yet built from the structure described in Ref.~\cite{wagner2022inequalities}, the quantity $\mathcal{D}(r)$ would be a natural candidate for \textit{any} future one; Ref.~\cite[section VI, B]{Chitambar19} discusses at length this standard construction within quantum resource theories and Ref.~\cite{Duarte18} gives some examples of the same construction for the resource theory of pre/post-processing for generalized contextuality. Letting $h: \mathbb{R}^{|E(G)|} \to \mathbb{R}$ represent the convex-linear functional of any given event graph inequality of $C_G$. The violation of $h$ can be described as $\vert h(r) - h(\tilde{r})\vert = \vert h(r-\tilde{r})\vert \leq \Vert h\Vert_\infty \Vert r-\tilde{r} \Vert_2 \leq \Vert h\Vert_\infty \Vert r-\tilde{r} \Vert_1$, for some $\tilde{r}$ that achieves the facet-defining bound $h(\tilde{r}) = s$ with $s$ given by the particular inequality $h(r)\stackrel{NC}{\leq}s$. Choosing $D(r,r^*) := \Vert r-r^*\Vert_1$ we get that
\begin{align*}
    &\mathcal{D}(r) = \min_{r^* \in C_G} \Vert r-r^*\Vert_1 \geq \min_{r^* \in C_G}\frac{\vert h(r) - h(r^*)\vert }{\Vert h \Vert_\infty} \\
    &= \frac{1}{\Vert h \Vert_\infty}\min_{r^* \in C_G}\vert h(r) - h(r^*)\vert = \left\{\begin{matrix}\frac{1}{\Vert h \Vert_\infty}\vert h(r)-s\vert & r \notin C_G \\ 0, & r \in C_G\end{matrix}\right..
\end{align*}
Any violation of the event graph inequalities may serve as a lower bound for $\mathcal{D}(r)$ up to the factor $1/\Vert h \Vert_\infty$. This allows us to use inequality violations to \textit{quantify} the coherence related to the set of states chosen. 

\section{General connection between event graphs and prepare-and-measure scenarios}\label{appendix: general}

In this appendix we show \textit{a mapping} between event graphs and prepare-and-measure contextuality scenarios. Recall that any state $\rho$ can also be understood as a POVM element of a binary outcome measurement procedure $M = \{\rho, \mathbb{1}-\rho\}$. This implies that any event graph defines  the preparation procedures to be considered as well as such possible measurements to be performed. Assuming that those states are close to being pure, it is possible to claim the existence of operations satisfying the operational constraints described below, in general, different than what was considered in the main text. The relationship can be described operationally as follows, for any given event graph $G$ with $G=(V,E)$:

\begin{enumerate}
    \item For each vertex $v \in V$ of $G$ we associate processes $P_v$ and measurement events $0\vert M_v$ of binary outcome measurements $M_v$. 
    \item For each edge $e \in E$ of $G$ we associate two other preparation processes $e= \{v,w\}$, $P_{v^\perp}^e, P_{w^\perp}^e$. 
    \item To each $4$-tuple of processes $(P_v,P_{v^\perp}^e,P_w,P_{w^\perp}^e)$ for each edge $e \in E$ we associate the operational equivalence $\frac{1}{2}P_v + \frac{1}{2}P_{v^\perp}^e \simeq \frac{1}{2}P_w + \frac{1}{2}P_{w^\perp}^e$.
    \item Each measurement procedure $M_v$ satisfy the operational equivalences that $p(0 \vert M_v,P_v) \geq 1-\varepsilon_v$ and $p(0 \vert M_v,P_{v^\perp}^e) \leq \varepsilon_v$ for every $e \in E$ for some real parameter $\varepsilon_v \geq 0$.
\end{enumerate}
In terms of the notation introduced in the main text we can write the above construction to define what we refer generically as Lostaglio-Senno-Schmid-Spekkens (LSSS) scenarios $\mathbb{B}_{LSSS} = (N_v + N_vN_e,N_v,2,\mathcal{OE}_P,\emptyset)$ where $\mathcal{OE}_P$ correspond to the operational equivalences just described. Assuming more operational equivalences in $\mathcal{OE}_P$ between the elements $P_{v^\perp}^e$ is also possible, as was done in the main text, which may simplify the scenario. We also assume that the scenarios $\mathbb{B}_{LSSS}$ satisfy the constraints of $p(0\vert M_v,P_v)\geq 1-\varepsilon_v$ and $p(0\vert M_v,P_{v^\perp}^e) \leq \varepsilon_v$ since this relation is somewhat needed for the structure of noncontextual models explaining the so-called operational confusabilities $r_{e} \equiv p(0\vert M_v,P_w)$ for every edge $e=\{v,w\}$. We have then the following result. We are using notation present in Refs.~\cite{chaturvedi2021characterising,wagner2021using} for the scenarios.

\begin{theorem}[Adapted from Ref.~\cite{Lostaglio2020contextualadvantage}]\label{theorem: lostagio_senno}
Let $\mathbb{B}_{LSSS}$ be the operational scenario described before. Then, a generalized noncontextual ontological model explaining the statistics of $p(0\vert M_v,P_w) \equiv p(M_v\vert P_w)$ must satisfy the following inequality,
\begin{equation}
    \vert \Vert \mu_v - \mu_w \Vert_1 - 2(1-p(M_v \vert P_w)) \vert \leq 2\varepsilon_v 
\end{equation}
in the limit of ideal measurements $\varepsilon_i \to 0$ we have,
\begin{equation}
    p(M_i \vert P_j) = 1 - \frac{1}{2}\int_\Lambda \vert \mu_i(\lambda) - \mu_j(\lambda)\vert \mathrm{d}\lambda.
\end{equation}
\end{theorem}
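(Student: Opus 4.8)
The plan is to derive the stated inequality directly from the definitions of a generalized-noncontextual ontological model together with the specific operational equivalence $\tfrac12 P_v + \tfrac12 P_{v^\perp}^e \simeq \tfrac12 P_w + \tfrac12 P_{w^\perp}^e$ imposed in the LSSS scenario, and then specialize to $\varepsilon_i\to 0$. First I would fix an edge $e=\{v,w\}$ and write down the preparation-noncontextuality constraint applied to this equivalence: since the two mixtures are operationally equivalent, preparation noncontextuality (Eq.~\eqref{eq: prep noncontextuality}) forces $\tfrac12\mu_v + \tfrac12\mu_{v^\perp}^e = \tfrac12\mu_w + \tfrac12\mu_{w^\perp}^e$ pointwise in $\lambda$, hence $\mu_v - \mu_w = \mu_{w^\perp}^e - \mu_{v^\perp}^e$ as signed measures on $\Lambda$. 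Taking $l_1$-norms gives $\Vert\mu_v-\mu_w\Vert_1 = \Vert\mu_{w^\perp}^e - \mu_{v^\perp}^e\Vert_1$.

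Next I would bring in the response function $\xi(0\vert M_v,\lambda)$ of the discrimination measurement $M_v$. Using the ontological decomposition $p(0\vert M_v,P_w) = \int_\Lambda \xi(0\vert M_v,\lambda)\,\mu_w(\lambda)\,d\lambda$ and similarly for $P_v$, I compute $p(M_v\vert P_v) - p(M_v\vert P_w) = \int \xi(0\vert M_v,\lambda)\,(\mu_v-\mu_w)(\lambda)\,d\lambda$. Because $0\le \xi(0\vert M_v,\lambda)\le 1$, this integral is bounded above by the positive part of $\mu_v-\mu_w$ and below by minus its negative part; combining with $\int(\mu_v-\mu_w)=0$, one gets $\big|\,p(M_v\vert P_v) - p(M_v\vert P_w)\,\big| \le \tfrac12\Vert\mu_v-\mu_w\Vert_1$. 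A symmetric argument applied to $P_{v^\perp}^e$ and $P_{w^\perp}^e$, using $p(M_v\vert P_{v^\perp}^e)\le\varepsilon_v$ and the identity $\mu_v-\mu_w=\mu_{w^\perp}^e-\mu_{v^\perp}^e$, lets me compare $p(M_v\vert P_w)$ against $p(M_v\vert P_v)$ and absorb the slack terms $p(M_v\vert P_v)\ge 1-\varepsilon_v$, $p(M_v\vert P_{v^\perp}^e)\le\varepsilon_v$ into the $2\varepsilon_v$ on the right-hand side. The cleanest route is to show the \emph{lower} bound $\Vert\mu_v-\mu_w\Vert_1 \ge 2\big(p(M_v\vert P_v) - p(M_v\vert P_w)\big) + 2\big(p(M_v\vert P_{w^\perp}^e) - p(M_v\vert P_{v^\perp}^e)\big)$ by choosing the natural ``indicator-of-the-support'' response function as an optimizer, and the \emph{upper} bound $\Vert\mu_v-\mu_w\Vert_1 \le 2(1-p(M_v\vert P_w)) + 2\varepsilon_v$ by a triangle-type estimate; together they collapse to the claimed $\big|\,\Vert\mu_v-\mu_w\Vert_1 - 2(1-p(M_v\vert P_w))\,\big|\le 2\varepsilon_v$.

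Finally, in the ideal limit $\varepsilon_v\to 0$ the two-sided bound pinches to an equality, $\Vert\mu_v-\mu_w\Vert_1 = 2(1-p(M_v\vert P_w))$, which rearranges to $p(M_v\vert P_w) = 1 - \tfrac12\int_\Lambda|\mu_v(\lambda)-\mu_w(\lambda)|\,d\lambda$, as stated. The main obstacle I expect is the second (upper-bound) direction: extracting $\Vert\mu_v-\mu_w\Vert_1 \le 2(1-p(M_v\vert P_w)) + 2\varepsilon_v$ requires using \emph{both} the $M_v$-statistics on $P_v,P_w$ \emph{and} the $M_w$-statistics, or equivalently exploiting the equivalence-class identity together with the near-perfect confusability signatures on \emph{both} endpoints of the edge; getting the combinatorics of which $\varepsilon$'s appear right — and checking that no additional noise terms from other edges leak in — is the delicate bookkeeping step, and is exactly where the convexity of the ontological model and the $l_1$ variational characterization $\Vert\nu\Vert_1 = \sup_{0\le f\le 1}\big(2\!\int f\,d\nu - \int d\nu\big)$ for zero-mean signed measures $\nu$ do the real work.
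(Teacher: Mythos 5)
Your lower-bound half is fine and matches the standard route: preparation noncontextuality applied to $\tfrac12 P_v+\tfrac12 P_{v^\perp}^e \simeq \tfrac12 P_w+\tfrac12 P_{w^\perp}^e$ gives $\mu_v-\mu_w=\mu_{w^\perp}^e-\mu_{v^\perp}^e$, and taking $f=\xi(0\vert M_v,\cdot)$ in the variational characterization of the $l_1$-norm yields $\tfrac12\Vert\mu_v-\mu_w\Vert_1\ge \int \xi(0\vert M_v,\lambda)\,(\mu_v-\mu_w)(\lambda)\,d\lambda\ge 1-\varepsilon_v-p(M_v\vert P_w)$. The genuine gap is the other half, $\Vert\mu_v-\mu_w\Vert_1\le 2(1-p(M_v\vert P_w))+2\varepsilon_v$, which you explicitly defer to a ``triangle-type estimate'' and flag as the delicate step; moreover your diagnosis of what it requires is off. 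You claim it needs the $M_w$-statistics, or near-perfect signatures at \emph{both} endpoints of the edge, but any argument of that form would import $\varepsilon_w$ into the bound, whereas the theorem's right-hand side contains only $2\varepsilon_v$. Since this upper bound is the half that does the real work (and feeds the corollary), the proposal as written does not establish the theorem. (For reference, the paper itself states this theorem without proof, adapting it from Ref.~\cite{Lostaglio2020contextualadvantage}; only the corollary is proved there, via the triangle inequality, so the relevant comparison is with the cited argument, which closes exactly the step you leave open.)

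The missing step uses only the $M_v$-constraints together with the noncontextuality identity you already derived. Pointwise, $(\mu_w-\mu_v)_+=(\mu_{v^\perp}^e-\mu_{w^\perp}^e)_+\le \mu_{v^\perp}^e$, and $\mu_w=\min(\mu_v,\mu_w)+(\mu_w-\mu_v)_+$ with $\int_\Lambda \min(\mu_v,\mu_w)\,d\lambda = 1-\tfrac12\Vert\mu_v-\mu_w\Vert_1$ for normalized densities. Hence, using $0\le\xi(0\vert M_v,\lambda)\le1$ and $\int\xi(0\vert M_v,\lambda)\,\mu_{v^\perp}^e(\lambda)\,d\lambda\le\varepsilon_v$,
\begin{align*}
p(M_v\vert P_w) &= \int_\Lambda \xi(0\vert M_v,\lambda)\,\mu_w(\lambda)\,d\lambda\\
&\le 1-\tfrac{1}{2}\Vert\mu_v-\mu_w\Vert_1 + \int_\Lambda \xi(0\vert M_v,\lambda)\,\mu_{v^\perp}^e(\lambda)\,d\lambda\\
&\le 1-\tfrac{1}{2}\Vert\mu_v-\mu_w\Vert_1+\varepsilon_v,
\end{align*}
which is the desired upper bound with no $\varepsilon_w$ appearing. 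Combined with your lower bound this gives $\vert\,\Vert\mu_v-\mu_w\Vert_1-2(1-p(M_v\vert P_w))\,\vert\le 2\varepsilon_v$, and the $\varepsilon_v\to0$ limit pinches to the ideal-case equality exactly as you conclude.
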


This kind of structural result for the confusabilities was first introduced in Ref.~\cite{schmid2018discrimination} using infinitely many operational equivalences. The above theorem directly implies, as a corollary, that the $n$-cycle event graph inequalities of $C_n$ provide us with robust generalized noncontextuality inequalities,

\begin{corollary}
{Let $G=C_n$ be a cycle graph with $n\geq 3$. Then, the $n$-cycle overlap inequalities, corresponding to the non-trivial facet-defining inequalities of the convex polytope $C_{C_n}$, can be mapped to robust noncontextuality inequalities of the scenario $\mathbb{B}_{LSSS}$,}
\begin{align*}
    r_{12}+r_{23}+\dots+r_{(n-1)n}-r_{n1} \leq n-2 + \varepsilon_1+\varepsilon_2+\varepsilon_3+\dots+\varepsilon_n
\end{align*}
and sign permutations of the r.h.s. 
\end{corollary}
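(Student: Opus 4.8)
The plan is to derive the $n$-cycle inequality as a direct consequence of Theorem~\ref{theorem: lostagio_senno}, by summing the bound it provides around the cycle $C_n$ and invoking the triangle inequality for the $\ell_1$-norm on $\Lambda$. Under the mapping described in Appendix~\ref{appendix: general}, the edge-weight $r_e$ of an edge $e=\{v,w\}$ is identified with the operational confusability $p(M_v\vert P_w)=p(M_w\vert P_v)$, the symmetry $r_{vw}=r_{wv}$ being imposed as part of the construction and respected by the near-pure quantum realization. Theorem~\ref{theorem: lostagio_senno} then says that any generalized-noncontextual ontological model of $\mathbb{B}_{LSSS}$ assigns to the preparations $P_i$ epistemic states $\mu_i$ obeying, for each edge $e=\{v,w\}$,
\begin{equation*}
2(1-r_e)-2\varepsilon_v \;\le\; \Vert \mu_v-\mu_w\Vert_1 \;\le\; 2(1-r_e)+2\varepsilon_v,
\end{equation*}
and likewise with $\varepsilon_w$ replacing $\varepsilon_v$. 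This two-sided estimate is the only property of the ontological model I would use.

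The concrete steps are as follows. First I would fix the cyclic order $1-2-\cdots-n-1$ of the vertices of $C_n$ and write the triangle inequality
\begin{equation*}
\Vert \mu_1-\mu_n\Vert_1 \;\le\; \sum_{i=1}^{n-1}\Vert \mu_i-\mu_{i+1}\Vert_1.
\end{equation*}
Then I would lower-bound the left-hand side by $2(1-r_{n1})-2\varepsilon_n$ (attaching index $n$ to the edge $\{n,1\}$) and upper-bound each summand on the right by $2(1-r_{i,i+1})+2\varepsilon_i$ (attaching index $i$ to the edge $\{i,i+1\}$). Dividing by $2$, collecting the $r$'s on the left and the $\varepsilon$'s on the right, and using $\sum_{i=1}^{n-1}(1-r_{i,i+1})=(n-1)-\sum_{i=1}^{n-1}r_{i,i+1}$, the inequality rearranges exactly to $r_{12}+r_{23}+\cdots+r_{(n-1)n}-r_{n1}\le n-2+\varepsilon_1+\cdots+\varepsilon_n$, which in the ideal limit $\varepsilon_i\to 0$ recovers the $n$-cycle overlap inequality of Eq.~\eqref{eq: n-cycle inequalities}, now certified as a constraint on every noncontextual model. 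The remaining $n-1$ facet-defining $n$-cycle inequalities differ only in which single edge carries the minus sign, and each is obtained by rerunning the identical argument with that edge playing the role of $\{n,1\}$ (equivalently, relabelling the base vertex of the triangle-inequality chain); this is what is meant by the ``sign permutations'' in the statement.

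The step I expect to be the only real subtlety — the heavy lifting being already done by Theorem~\ref{theorem: lostagio_senno} — is bookkeeping. One must check that all the $\mu_i$ live in one and the same ontological model, that the operational equivalences of $\mathbb{B}_{LSSS}$ are precisely those under which the theorem applies to each consecutive pair of preparations around the cycle, and that the freedom to attach either $\varepsilon_v$ or $\varepsilon_w$ to the edge $\{v,w\}$ is used consistently so that each $\varepsilon_i$ appears exactly once on the right-hand side. I would also flag that this argument only establishes \emph{validity} of the inequality for noncontextual behaviours; its being facet-defining for $C_{C_n}$ is a separate fact, already established in Ref.~\cite{galvaobroad2020quantumandclassical}, and is not part of what the corollary asserts.
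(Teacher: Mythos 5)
Your proposal is correct and follows essentially the same route as the paper: the two-sided bound on each confusability $r_{i,i+1}$ from Theorem~\ref{theorem: lostagio_senno} combined with the triangle inequality for the $\ell_1$-norm applied along the cycle, then rearranged to give $n-2+\sum_i\varepsilon_i$ on the right-hand side. Your bookkeeping remarks (a single ontological model, one $\varepsilon_i$ per edge, relabelling the distinguished edge for the other sign permutations) simply make explicit what the paper leaves implicit.
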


\begin{proof}
Apply the triangle inequality satisfied by the $l_1$-norm, similarly to what has already been done in Ref.~[Appendix E]\cite{wagner2022inequalities}, while now allowing $\varepsilon_v \neq 0$ and therefore using the robust bounds provided by theorem~\ref{theorem: lostagio_senno}. Every confusability $r_{ii+1}$ satisfy,
\begin{equation*}
    -\varepsilon_i+1-\frac{1}{2} \Vert \mu_i - \mu_{i+1}  \Vert_1 \leq r_{ii+1}\leq \varepsilon_i + 1 - \frac{1}{2}\Vert \mu_i - \mu_{i+1}  \Vert_1.
\end{equation*}
Using this and applying the triangle inequality to $\Vert \mu_1 - \mu_n + (\mu_2-\mu_2+\dots+\mu_{n-1}-\mu_{n-1})\Vert_1$ we have the result.
\end{proof}

When we treat quantum theory as an operational theory, the confusabilities $p(M_i \vert P_j)$ match the two-state overlaps related to processes $P_i$ in the equivalence class defined by $\vert \phi_i \rangle$. 

The mapping from event graphs into prepare-and-measure scenarios is general. However, only the cycle-inequalities have been connected to noncontextuality inequalities. If \textit{any} event graph inequality can also be understood as a noncontextuality inequality for such scenarios, under the symmetries respected by confusabilities $r_{ij} = r_{ji}$ is an open problem.

\section{MZI with a generic state in input}

We can use the so-called double Mach-Zehnder interferometer~\cite{chrysosthemos_quantum_2022} to input general states in a standard MZI, see Fig.~\ref{fig:doubleMZI} for the construction. This is a simple way of constructing a MZI with general input states. 

\begin{figure}[H]
    \centering
    \includegraphics[width=0.7\textwidth]{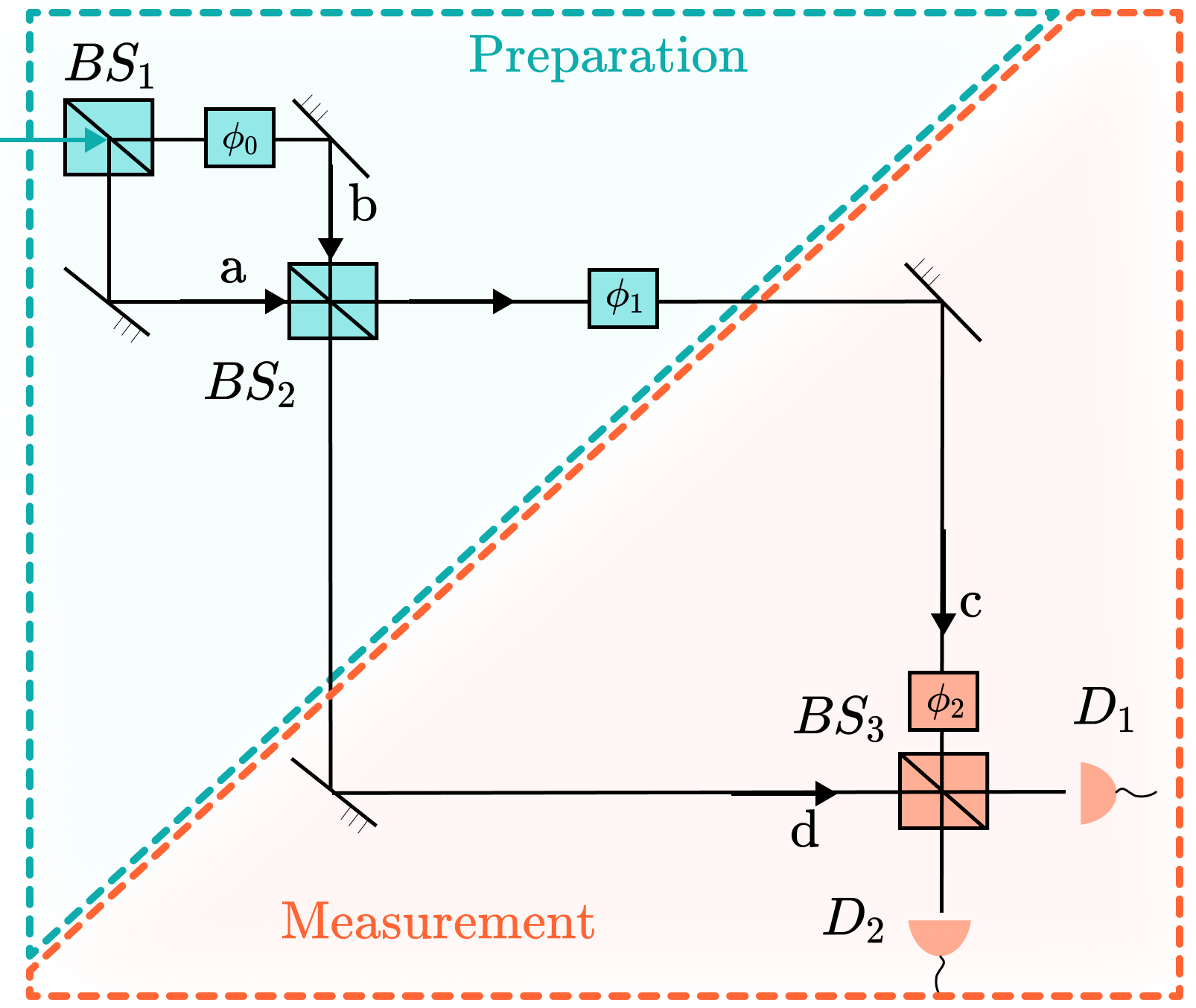}
    \caption{\textbf{Mach-Zehnder interferometer with generic input state.}}
    \label{fig:doubleMZI}
\end{figure}

When we refer to the MZI, we mean the two BSs and PSs $BS_1,BS_2$ and $\phi_1,\phi_2$ depicted in Fig.~\ref{fig:doubleMZI} while we simply consider the first stage as a universal single-qubit state preparation tool, described by $BS_1$ and $\phi_0$. In this way we may input a MZI with  $\ket{\psi}=\alpha\ket{0}+\beta\ket{1}$.

\noindent The first MZI creates a general state 
\begin{align}
    \ket{\psi_1}=  \cos\theta_0 \ket{0} + ie^{i\phi_0}\sin\theta_0\ket{1},
\end{align}
where $\ket{0}$ denotes the \textit{transmitted path}, meaning the upper path in the first MZI and the lower in the second, while $\ket{1}$ the complementary. After the second BS we have
\begin{align}
    \ket{\psi_2} &= \cos\theta_0(\cos\theta_1\ket{0}\nonumber+i\sin\theta_1\ket{1}) +\nonumber\\ &+ie^{i\phi_0}\sin\theta_0(\cos\theta_1\ket{1}\nonumber+i\sin\theta_1\ket{0}) = \nonumber\\
    &= (\cos\theta_0\cos\theta_1-e^{i\phi_0}\sin\theta_0\sin\theta_1)\ket{0}\nonumber\\
    &+i(\cos\theta_0\sin\theta_1+e^{i\phi_0}\sin\theta_0\cos\theta_1)\ket{1}
    \end{align}
Finally, the third state is given by the action of the PS described with an angle $\phi_1$
\begin{align}
    &\ket{\psi_3}=(\cos\theta_0\cos\theta_1-e^{i\phi_0}\sin\theta_0\sin\theta_1)\ket{0}\nonumber\\
    &+ie^{i\phi_1}(\cos\theta_0\sin\theta_1+e^{i\phi_0}\sin\theta_0\cos\theta_1)\ket{1}.
\end{align}

Summarizing, we have
\begin{align*}
    \ket{\psi_1}&= \cos\theta_0 \ket{0} + ie^{i\phi_0}\sin\theta_0\ket{1}\\
    \ket{\psi_2}&=(\cos\theta_0\cos\theta_1-e^{i\phi_0}\sin\theta_0\sin\theta_1)\ket{0}+i(\cos\theta_0\sin\theta_1+e^{i\phi_0}\sin\theta_0\cos\theta_1)\ket{1}\\
    \ket{\psi_3}&=(\cos\theta_0\cos\theta_1-e^{i\phi_0}\sin\theta_0\sin\theta_1)\ket{0}+ie^{i\phi_1}(\cos\theta_0\sin\theta_1+e^{i\phi_0}\sin\theta_0\cos\theta_1)\ket{1}
\end{align*}
Since we are interested in evaluating the inequality $r_{12}+r_{13}-r_{23}$, let us now compute the square modulus of the corresponding overlaps. A simple, but tedious, calculation gives
\begin{align*}
    |\langle \psi_1\vert \psi_2 \rangle|^2 = \underbrace{(\cos^2\theta_0 + \sin^2\theta_0)^2}_{=1} \cos^2\theta_1 + 2 \underbrace{\cos^2\theta_0 \sin^2\theta_0}_{\frac{1}{4}\sin^2(2\theta_0)} \sin^2\theta_1 - 2 \underbrace{\cos^2\theta_0 \sin^2\theta_0}_{\frac{1}{4}\sin^2(2\theta_0)} \sin^2\theta_1 \cos(2\phi_0)
\end{align*}
hence,
\begin{align*}
    |\langle \psi_1\vert \psi_2 \rangle|^2 =  \cos^2\theta_1 + \frac{1}{2} \sin^2(2\theta_0) \sin^2\theta_1 - \frac{1}{2} \sin^2(2\theta_0) \sin^2\theta_1 \cos(2\phi_0)=\\\cos^2\theta_1+\frac{1}{2} \sin^2(2\theta_0) \sin^2\theta_1(1-\cos(2\phi_0)),
\end{align*}
\begin{align*}
    |\langle\psi_1\vert\psi_3\rangle|^2 = &(\cos^4\theta_0 + \sin^4\theta_0)\cos^2\theta_1 + 2 \cos^2\theta_0 \sin^2\theta_0 \sin^2\theta_1 +\\
    &+2\cos\theta_0 \sin\theta_0 (\sin\theta_1 (-((\cos\theta_0 - \sin\theta_0)(\cos\theta_0 + \sin\theta_0)\cos\theta_1
    (\cos\phi_0 - \cos(\phi_0 - \phi_1))) -\\
    &-\cos\theta_0 \sin\theta_0 \sin\theta_1 \cos(2\phi_0 - \phi_1)) + \cos\theta_0 \sin\theta_0 \cos^2\theta_1 \cos\phi_1)
\end{align*}
\begin{align*}
     |\langle\psi_2 \vert \psi_3 \rangle |^2 =& (\cos^4\theta_0 + \sin^4\theta_0) \cos^4\theta_1 + 8 \cos^2\theta_0 \sin^2\theta_0 \cos^2\theta_1 \sin^2\theta_1 +(\cos^4\theta_0 + \sin^4\theta_0) \sin^4\theta_1 -\\
     & -4 \cos^2\theta_0 \sin^2\theta_0 \cos^2\theta_1 \sin^2\theta_1 \cos(2\phi_0) (-1 + \cos\phi_1) + 2 (\cos^2\theta_0 \sin^2\theta_0 \cos^4\theta_1 +\\
     &+ (\cos^2\theta_0 - \sin^2\theta_0)^2 \cos^2\theta_1 \sin^2\theta_1+ \cos^2\theta_0 \sin^2\theta_0 \sin^4\theta_1) \cos\phi_1 - 8 \cos\theta_0 (\cos\theta_0 - \sin\theta_0)\cdot\\
     &\cdot\sin\theta_0 (\cos\theta_0 + \sin\theta_0) \cos\theta_1 (\cos\theta_1 - \sin\theta_1) \sin\theta_1 (\cos\theta_1 + \sin\theta_1) \cos\phi_0 \sin^2\frac{\phi_1}{2},
\end{align*}
from which we obtain
\begin{align*}
    h_1 =& |\langle \psi_1\vert \psi_2\rangle |^2 +|\langle \psi_1\vert \psi_3\rangle |^2- |\langle \psi_2 \vert \psi_3 \rangle |^2 = \\
    =& \frac{1}{4}((7 + \cos(4\theta_0)) \cos^2\theta_1 - (3 +\cos(4\theta_0))\cos^4\theta_1 - (\cos\phi_1 (4 \cos^2(2\theta_0) + (1 + 3 \cos(4\theta_0)) \cos(2\theta_1)) +\\
    &+8 \cos^2(\theta_0) (\cos(2\phi_0) + \cos(2\phi_0 - \phi_1) +2\cos(2\theta_1))\sin^2\theta_0) \sin^2\theta_1 - (3 + \cos(4\theta_0)) \sin^4\theta_1 \\
    &+ (\cos(\phi_0 - \phi_1) +\cos\phi_0 (-1 + 4\cos(2 \theta_1) \sin^2\frac{\phi_1}{2})) \sin(4\theta_0) \sin(2\theta_1) +\\
    &+\cos(2\phi_0) (-1 + \cos\phi_1) \sin^2(2\theta_0) \sin^2(2\theta_1)),
\end{align*}
\begin{align*}
    h_2 =& h_3 = \mp |\langle \psi_1 \vert \psi_2 \rangle |^2 \pm |\langle \psi_1 \vert \psi_3 \rangle |^2+
    |\langle \psi_2 \vert \psi_3 \rangle |^2 = \\
    =&\frac{1}{4}(\cos^4\theta_0 (3 + \cos(4\theta_1)) + (3 + \cos(4\theta_1)) \sin^4\theta_0 + \cos\phi_1 (\cos(2\theta_1) \sin^2(2\theta_0) \\
    &+ 8\cos^2(2\theta_0) \cos^2\theta_1 \sin^2\theta_1) +\\
    &+2 \cos^2\theta_0 \sin^2\theta_0 [5 \cos\phi_1 - 4 \cos^2\theta_1 + \cos\phi_1 \cos(4\theta_1) +\\
    &+ 4 (-\cos(2 \phi_0 - \phi_1) +\cos(2 \phi_0) (1 - 2 (-1 + \cos\phi_1)\cdot \cos^2\theta_1)) \sin^2\theta_1] +\\
    &+(\cos(\phi_0 - \phi_1) - \cos\phi_0 (1 + 4 \cos(2\theta_1) \sin^2\frac{\phi_1}{2})) \sin(4\theta_0) \sin(2\theta_1) + 2\sin^2(2\theta_0)\sin^2(2\theta_1)).
\end{align*}

We have used $h_1$ in Fig.~\ref{fig: violations for any theta and phi} (c)-(f) using $\theta_0 = \pi/4$, $\phi_0 = \phi$ while $(\theta_1,\phi_1)$ remain the same as in the main text.

\end{document}